\definecolor{Red}{rgb}{1.00, 0.00, 0.00}
\definecolor{Blue}{rgb}{0.00, 0.00, 1.00}
\definecolor{PaleGrey}{rgb}{0.7, 0.7, 0.7}
\newcommand{\G}{\mathcal{G}}
\newcommand{\gt}{\mathcal{G}_t}
\newcommand{\R}{\mathbb{R}}
\newcommand{\Rplus}{\mathbb R_{>0}}
\newcommand{\Px}{\mathbb{P}}
\newcommand{\Exx}{\mathbb{E}}
\newcommand{\E}{\Exx}
\newcommand{\Qxx}{\mathbb{Q}}
\newcommand{\dbarV}{\bar{\bar{{V}}}}
\newcommand{\dbarZ}{\bar{\bar{{Z}}}}
\newcommand{\uu}{\bar{\bar{w}}}
\newcommand{\hv}{\hat{\hat{w}}}
\newcommand{\vv}{\bar{v}}
\def\ind{{\mathchoice{1\mskip-4mu\mathrm l}{1\mskip-4mu\mathrm l}
{1\mskip-4.5mu\mathrm l}{1\mskip-5mu\mathrm l}}}
\newcommand{\abs}[1]{ \left \vert #1 \right \vert}
\newcommand{\define}{:=}
\newcommand{\rrp}{r_r^+}
\newcommand{\rrm}{r_r^-}
\newcommand{\rcp}{r_c^+}
\newcommand{\rcm}{r_c^-}
\newcommand{\rfp}{r_f^+}
\newcommand{\rfm}{r_f^-}
\newcommand{\XVA}{\mbox{XVA}}
\newtheorem{theorem}{Theorem}[section]
\newtheorem{definition}[theorem]{Definition}
\newtheorem{proposition}[theorem]{Proposition}
\newtheorem{remark}[theorem]{Remark}
\title{Arbitrage-Free Pricing of XVA -- Part II:\\ PDE Representation and Numerical Analysis \footnote
{The paper \cite{BCM} {subsumes the present working paper as well as} \cite{BicCapSturm}. }
}
\author{
Maxim Bichuch \thanks{Email: mbichuch@wpi.edu, Department of Mathematical Sciences, Worcester Polytechnic Institute}
\and
Agostino Capponi \thanks{Email: ac3827@columbia.edu, Industrial Engineering and Operations Research Department, Columbia University}
\and
Stephan Sturm \thanks{Email: ssturm@wpi.edu, Department of Mathematical Sciences, Worcester Polytechnic Institute}
}
\begin{document}

\maketitle

\begin{abstract}
We study the semilinear partial differential equation (PDE) associated with the non-linear BSDE characterizing buyer{'}s and seller{'}s XVA in a framework that allows for asymmetries in funding, repo and collateral rates, as well as for early contract termination due to counterparty credit risk. We show the existence of a unique classical solution to the PDE by first proving the existence and uniqueness of a viscosity solution and then its regularity. We use the uniqueness result to conduct a thorough numerical study illustrating how funding costs, repo rates, and counterparty risk contribute to determine the total valuation adjustment.
\end{abstract}

\section{Introduction}
We study the total valuation adjustment (XVA) of a European style claim under differential borrowing and lending rates. We adopt the framework introduced in the companion paper of \cite{BicCapSturm}, where the authors define buyer{'}s and seller{'}s XVA in terms of solutions to non-linear BSDEs. These characterize the portfolio process replicating long and short positions in the claim,  and take into account early termination of the contract due to counterparty credit risk, funding spreads, as well as collateral servicing costs.
When funding spreads are zero, security lending and borrowing rates coincide, and collateral rates paid by the taker equal the ones received by the provider, \cite{BicCapSturm} give explicit expressions for the XVA as well as for the corresponding replicating strategies. However, in most realistic market scenarios, rates are asymmetric and funding spreads constitute a significant driver of XVA. Despite the unavailability of closed form expressions, in this paper we show that we can exploit the connection between the BSDEs and the corresponding nonlinear PDEs to study how funding costs, collateral and counterparty risk affect the total valuation adjustment.

Our study extends previous literature along two directions. First, we develop a rigorous study of the semilinear PDEs associated with the nonlinear BSDEs introduced in \cite{BicCapSturm} and yielding the XVA prices. Previous studies in this direction include \cite{Mercurio}, who gives the PDE representations for the lower and upper bound prices of  options in an extended \cite{Piterbarg}{'}s model where borrowing and lending rates differ. However, he does not take into account counterparty risk. This is accounted for in \cite{Burgard} and \cite{BurgardCR}, who consider an extended Black-Scholes framework where two corporate bonds are introduced in order to hedge the default risk of the trader and of his counterparty. They generalize their framework in \cite{BurgardRrisk} to include collateral mitigation and evaluate the impact of different funding strategies. We also mention \cite{Crepeyimm} who study a BSDE with random terminal time of the type appearing in a funding valuation context, but do not develop connections to the resulting PDE.

We prove the existence and uniqueness of a viscosity solution to the PDE. We extend standard results of \cite{Delong} by using the locality of the viscosity solution, and also by accounting for the possibility of default risk. We then show that the PDE also admits a classical solution implying existence and uniqueness of the classical solution. Our result contrasts with earlier studies in the funding literature, e.g. \cite{BurgardCR}, where smoothness of the price of the derivative contract with respect to the underlying spot asset price is usually assumed to hold, and the PDE is derived after assuming that the market does not admit arbitrage. Using the classical solution, we compute the replicating strategies of the traded claim in a portfolio consisting of a stock, and risky bonds of the trader and of his counterparty. To the best of our knowledge, this is the first study to establish this result. \cite{Crepeya} develops a connection between the funding BSDE and the solution of a classical PDE, but in a different framework where the resulting PDE is linear and the market model complete, see Section 5.2 therein. We also mention the recent work of \cite{NiRut} who establish classical solutions to quasi-linear PDEs corresponding with BSDEs yielding the fair bilateral prices of a European contingent claim. Their work, however, does not consider the possibility that the trader or his counterparty can default.

A second noticeable contribution of our study is a comprehensive numerical analysis made possible by the previously established existence and uniqueness result. Earlier studies on funding, such as \cite{br}, \cite{Crepeya}, and \cite{BrigoPerPal} lack the computational component. We find strong sensitivity of XVA to funding costs, counterparty risk and collateralization levels. Viewing both buyer{'}s and seller{'}s XVA as functions of collateral levels defines a no-arbitrage band whose width increases with the funding spread. As the position becomes more collateralized, the value of the closeout payment increases and leads the trader to implement a riskier strategy. This yields increased buyer's and seller's XVA. Interestingly, both of these quantities may decrease in the counterparty{'}s default intensity as the default premium compensation demanded by the investor for holding counterparty bonds becomes higher than the funding costs incurred for replicating the closeout position. Moreover, increased values of counterparty's default intensity reduce both seller's and buyer's XVA by a similar amount.

Our findings serve as a useful guide to risk-management and trading desks, who can decide on the terms of the trades (collateralization levels, borrowing rates charged by treasury, etc...) based on the incurred costs as measured by XVA.

The rest of the paper is organized as follows. Section \ref{sec:model} reviews the framework for computing XVA introduced in \cite{BicCapSturm}.
Section \ref{sec:FK} analyzes the semilinear PDEs. Section \ref{sec:numanalysis} develops a comprehensive numerical analysis. Section
\ref{sec:conclusions} concludes the paper.

\section{The Setup}\label{sec:model}
We review the setup introduced in the companion paper by \cite{BicCapSturm} and refer to sections 2 and 3 of their paper for additional details. We give here a self-contained exposition needed to understand the PDE analysis presented in the next section. We fix a probability space $(\Omega,\G,{\Px})$, where $\Px$ denotes the physical measure. The market filtration is $\gt := \mathcal{F}_t \vee \mathcal{H}_t$, $t \geq 0$, where the filtrations $\mathcal{F}_t$ and $\mathcal{H}_t$ will be specified later in the section.

We take the viewpoint of an investor ``$I$'' (also referred to as trader or hedger) who wants to compute the total valuation adjustment, abbreviated as XVA, of a European claim traded with a counterparty ``$C$''. The claim is written on a stock whose time $t$ price is denoted by $S_t$. The claim matures at $T$ and has terminal payoff $\Phi(S_T)$, where $\Phi:\mathds{R}_{>0} \rightarrow \mathds{R}$ is a piecewise continuously differentiable real valued function of polynomial growth. The value of the claim as determined by a third party evaluator is given by

  \[
    \hat{V}(t,S_t) = e^{-r_D(T-t)} \mathbb{E}^{\Qxx}\bigl[ \Phi(S_T) \, \bigr\vert \, \mathcal{F}_t \bigr],
  \]
where the valuation measure $\Qxx$ associated with the publicly available discount rate $r_D$ chosen by the valuation agent is equivalent to $\Px$ and given by the Radon-Nikod\'{y}m density
  \begin{equation}\label{eq:q-girsanov}
    \frac{d\Qxx}{d\Px} \bigg|_{\mathcal{G}_{\tau}} = e^{\frac{r_D-\mu}{\sigma}W_{\tau}^{\Px} - \frac{(r_D-\mu)^2}{2\sigma^2}\tau} \Bigl(\frac{\mu_I - r_D}{h_I^{\Px}}\Bigr)^{H^I_\tau} e^{(r_D-\mu_I+h_I^{\Px})\tau}\Bigl(\frac{\mu_C - r_D}{h_C^{\Px}}\Bigr)^{H^C_\tau} e^{(r_D-\mu_C+h_C^{\Px})\tau}.
  \end{equation}

\subsection{The Portfolio securities}
This section describes the securities at disposal of the investor to construct his replicating portfolio. They include the default-free stock on which the financial claim is written, and two risky bonds underwritten by the trader and by his counterparty. Under the physical measure, the dynamics of the stock price process is given by
\[
    dS_t = \mu S_t \,dt + \sigma S_t \,dW_t^{\Px},
\]
where $\mu$ and $\sigma$ are constants. Moreover, $W^{\Px}$ is a standard Brownian motion under the probability measure $\Px$. We set $\mathbb{F} := (\mathcal{F}_t)_{t \geq 0}$ to be the $(\mathcal{G},\Px)$-augmentation of the filtration generated by $W^{\Px}$.

Let $\tau_I$ and $\tau_C$ be the default times of the trader and of his counterparty, respectively. These are independent exponentially distributed random variables with constant intensities $h_i^{\Px}$, $i \in \{I,C\}$. We use $H^i_t= \ind_{\{\tau_i\leq t\}}$, $t\geq0$, to denote the default indicator process of the $i$-th name. The default event filtration is set to $\mathcal{H}_t := \sigma(H^I_u, H^C_u \; ; u \leq t)$.

The two risky bonds underwritten by the trader $I$ and by his counterparty $C$ mature have the same maturity $T$ of the claim. For $0 \leq t \leq T$, $i \in \{I, C\}$, their price processes admit dynamics given by
  \begin{equation}\label{eq:priceproc}
    dP^i_t = (r^i+h_i^{\Px}) P^i_t \, dt - P^i_{t-} \,dH_t^i, \qquad P^i_0 = e^{-(r^i+h_i^{\Px}) T},
  \end{equation}
where $r^i$ is constant. Throughout the paper, we set $\tau := \tau_I \wedge \tau_C \wedge T$. We also recall the following relation, given in \cite{BicCapSturm}, between the physical measure $\Px$ and the valuation measure $\Qxx$:
\begin{align}
\nonumber W^{\Qxx}_t &= W^{\Px}_t + \frac{\mu-r_D}{\sigma} t \\
\nonumber \varpi_t^{i,\Qxx} &= \varpi_t^{i,\Px} + \int_0^t \bigl(1-H_u^i\bigr) (h_i^{\Px} - h_i^{\Qxx}) du, \qquad \qquad i \in \{I,C\}, \\
h_i^{\Qxx} &= h_i^{\Px} + r^i-r_D,  \qquad \qquad \qquad \qquad \qquad \qquad i \in \{I,C\}.
\label{eq:relationsmeasures}
\end{align}

\subsection{Security, Funding, and Collateral Accounts}
This section describes the various trading accounts used by the hedger to finance the strategy replicating the price process of the claim.

\subsubsection{The Security Account}
Financing and lending activities related to the stock security happen via the repo market. We denote by $r_r^{+}$ the rate received by the hedger when he lends money to the repo market. Symmetrically, we denote by $r_r^{-}$ the rate he has to pay when he borrows money from the repo market. Let $\psi_t$ be the number of shares of the security account at time $t$. The value of this account at $t$ is given by
\begin{equation}\label{eq:Brt}
    B_t^{r_r} := B_t^{r_r}\bigl(\psi \bigr) = e^{\int_0^t r_r (\psi_s) ds},
\end{equation}
where
  \begin{equation}\label{eq:rr}
    r_r(x) = r_r^{-} \ind_{\{x<0\}}+r_r^{+} \ind_{\{x>0\}}.
  \end{equation}

\subsubsection{The Funding Account}
The hedger finances the portfolio strategy to replicate the claim from his treasury. We denote by $r_f^{+}$ the rate at which the hedger lends cash to the treasury, and by $r_f^{-}$ the rate at which he borrows cash from it. Let $\xi^f_t$ denote the number of shares of the funding account at time $t$, whose value  is given by
  \begin{equation}\label{eq:Brf}
    B_t^{r_f}  := B_t^{r_f}\bigl(\xi^f) = e^{\int_0^t r_f(\xi^f_s) ds},
  \end{equation}
where
  \begin{equation}\label{eq:rrf}
    r_f  := r_f(y)= \rfm \ind_{\{{y < 0}\}}+\rfp \ind_{\{{y > 0}\}}.
  \end{equation}

\subsubsection{Collateral Account}

Each party posts collateral to mitigate potential losses incurred by the other party in case of default. Collateral is in the form of variation margins, i.e. cash transfers are made by the out-of-the money party to cover unfavourable moves in the value of the underlying transaction. On $\{\tau_I \wedge \tau_C > t\}$, the collateral process is given by
\begin{equation}\label{eq:rulecoll}
    C_t := \alpha \hat{V}(t,S_t),
\end{equation}
where the collateral level $0 \leq \alpha \leq 1$ determines the amount of covered exposure. Notice that in our setup, if the investor sold the claim to his counterparty, then he is always the collateral taker ($C_t < 0$), while if he purchased the claim from this counterparty he is always the collateral provider
($C_t > 0$). We denote by $r_c^{+}$ the rate on the collateral amount received by the hedger if he has posted collateral, and by $r_c^{-}$ the rate paid by the hedger if he has received collateral. The collateral account is defined by
  \[
    B_t^{r_c} := B_t^{r_c}(C) = e^{\int_0^t r_c(C_s) ds},
  \]
where
  \[
    r_c(x) = \rcp \ind_{\{x>0\}} + \rcm \ind_{\{x<0\}}.
  \]
We use $\psi_t^c$ to denote the number of shares of collateral account held by the hedger at time $t$. It then holds that
\begin{equation}\label{eq:collrel}
    \psi_t^{c}B_t^{r_c}  = - C_t
\end{equation}

\subsection{Close-out Value of Transaction}\label{sec:closeout}

Since both the hedger and his counterparty can default, the transaction can terminate prematurely. In the event of a default, the mark-to-market value of the transaction equals the residual value, after mitigating on-default related losses with the available collateral. Let $x^+ := \max(x,0)$, and $x^- := \max(0,-x)$, be respectively the positive and negative parts of a real number $x$. Denote by $\theta_{\tau}(\hat{V})$ the value of the transaction which must be replicated by the hedger at the earlier of the two: the hedger{'}s, or the counterparty{'}s default time (if positive the hedger owes this amount to the counterparty). This is given by
  \begin{align}
     \theta_{\tau}(\hat{V}) & \phantom{:}=  \theta_{\tau}(C,\hat{V})\label{eq:theta} \\
         \nonumber & := \hat{V}(\tau,S_{\tau}) + \ind_{\{\tau_C<\tau_I \}} L_C \bigl(\hat{V}(\tau,S_{\tau})-C_{\tau-}\bigr)^- - \ind_{\{\tau_I<\tau_C \}} L_I \bigl(\hat{V}(\tau,S_{\tau}) - C_{\tau-}\bigr)^+\nonumber
  \end{align}
Here $0 \leq L_I \leq 1$ and $0 \leq L_C \leq 1$ are constant loss rates. We refer to Section 3.3 of  \cite{BicCapSturm} for a detailed explanation.

\subsection{Total Valuation Adjustment}

We review the definition of XVA given in \cite{BicCapSturm}. Denote by $V_t({\bm\varphi})$ the wealth process of the hedger associated with the strategy ${\bm\varphi}$, where ${\bm\varphi} := \bigl(\xi_t,\xi_t^f,\xi_t^I, \xi_t^C \; t \geq 0\bigr)$. We recall from above that $\xi^{f}$ is the number of shares of the funding account. We use $\xi$ to denote the number of stock shares. Moreover, $\xi^I$ and $\xi^C$ denote the number of shares of the bonds underwritten by the trader, and by his counterparty, respectively. The wealth process $V({\bm \varphi})$ is given by the following expression
\begin{equation}\label{eq:wealth}
   V_t({\bm\varphi}) := \xi_t S_t + \xi_t^I P_t^I + \xi_t^C P_t^C + \xi_t^f B_t^{r_f} + \psi_t B_t^{r_r} - \psi_t^{c} B_t^{r_c}.
\end{equation}

\cite{BicCapSturm} show that, under the valuation measure $\Qxx$, the dynamics of the wealth process is given by

\begin{align}\label{eq:vtlast}
    \nonumber dV_t &=    \Bigl( \rfp \bigl(\xi_t^f B_t^{r_f}\bigr)^+ -\rfm \bigl(\xi_t^f B_t^{r_f}\bigr)^- +  (r_D - \rrm) \bigl(\xi_t S_t\bigr)^+ - (r_D - \rrp) \bigl(\xi_t S_t\bigr)^- + r_D \xi_t^I P_t^I + r_D \xi_t^C P_t^C\Bigr) \, dt \\
    \nonumber & \phantom{=}- \Bigl(r_c^+ \bigl(C_t\bigr)^+ - r_c^- \bigl(C_t\bigr)^-  \Bigr) \, dt  + \xi_t \sigma S_t \, dW_t^{\Qxx}  - \xi_{t-}^I P_{t-}^I   \, d\varpi_t^{I,\Qxx}  - \xi_{t-}^C P_{t-}^C d\varpi_t^{C,\Qxx} \\
  \end{align}
Setting
  \begin{equation}\label{eq:Zetas}
    \nonumber Z_t = \xi_t \sigma S_t, \qquad Z^I_t = -\xi_{t-}^I P_{t-}^I, \qquad Z_t^C = -\xi_{t-}^C P_{t-}^C,
  \end{equation}
the dynamics~\eqref{eq:vtlast} may be rewritten as
  \begin{align}\label{eq:vtlast2}
    \nonumber dV_t &= \Bigl(\rfp \bigl(V_t + Z_t^I + Z_t^C - C_t\bigr)^+ -\rfm \bigl(V_t + Z_t^I + Z_t^C - C_t\bigr)^-\\
    \nonumber & \phantom{=} +  (r_D - \rrm) \frac{1}{\sigma}\bigl(Z_t\bigr)^+ -  (r_D - \rrp) \frac{1}{\sigma}\bigl(Z_t\bigr)^- - r_D Z_t^I - r_D Z_t^C  - \Bigl(r_c^+ \bigl(C_t\bigr)^+ - r_c^- \bigl(C_t\bigr)^-  \Bigr) \,  \Bigr) dt \\
    & \phantom{=} + Z_t\, dW_t^{\Qxx} + Z_t^I \, d\varpi_t^{I,{ \Qxx}}  + Z_t^C \, d\varpi_t^{C,\Qxx}
  \end{align}
Next, we distinguish between $V_t^+$ and $-V_t^-$. We use $V_t^+$ to describe the wealth process of the hedger when he replicates the claim $\Phi(S_T)$ (hence hedging the position after selling the claim with terminal payoff $\Phi(S_T)$). On the other hand $-V_t^-$ describes the wealth process when replicating the claim $-\Phi(S_T)$ (hence hedging the position after buying the claim with terminal payoff $\Phi(S_T)$). To this purpose, define
  \begin{align}
    f^+\bigl(t,v,z,z^I,z^C; \hat{V}\bigr) &= -\Bigl(\rfp \bigl(v+ z^I + z^C- \alpha \hat{V}_t\bigr)^+ -\rfm \bigl(v + z^I + z^C - \alpha \hat{V}_t\bigr)^- \nonumber\\ & \phantom{=:}+  (r_D - \rrm) \frac{1}{\sigma}z^+  -  (r_D - \rrp) \frac{1}{\sigma}z^- - r_D z^I - r_D z^C \nonumber\\
    & \phantom{=:} + r_c^+ \bigl(\alpha \hat{V}_t\bigr)^+ - r_c^-\bigl(\alpha \hat{V}_t\bigr)^-  \Bigr)\\
    f^-\bigl(t,v,z,z^I,z^C; \hat{V}\bigr) &= - f^+\bigl(t,-v,-z,-z^I,-z^C; -\hat{V}_t\bigr)
  \end{align}
where the driver depends on the market evaluation process $(\hat{V}_t)$ (via the collateral $(C_t)$). In particular $f^\pm \, : \, \Omega \times [0,T] \times \R^4$, $(\omega, t,v,z,z^I,z^C) \mapsto f^\pm\bigl(t,v,z,z^I,z^C; \hat{V}_t(\omega)\bigr)$ are drivers of the BSDEs. Moreover, define $V^{+}$, $V^{-}$ as solutions of the BSDEs
  \begin{equation}\label{eq:BSDE-sell}
    \left\{ \begin{array}{rl} -dV_t^+ &= f^+\bigl(t,V_t^+,Z_t^+,Z_t^{I,+},Z_t^{C,+}; \hat{V}\bigr) \, dt - Z^+_t\, dW_t^{\Qxx} - Z_t^{I,+} \, d\varpi_t^{I,\Qxx}  - Z_t^{C,+} \, d\varpi_t^{C,\Qxx}\\
    V_\tau^{+} & = \theta_\tau(\hat{V}) \ind_{\{\tau<T\}} + \Phi(S_T)\ind_{\{\tau = T\}} \end{array}\right.
  \end{equation}
and
  \begin{equation}\label{eq:BSDE-buy}
    \left\{ \begin{array}{rl} -dV_t^{-} &= f^-\bigl(t,V_t^-,Z_t^-,Z_t^{I,-},Z_t^{C,-}; \hat{V}\bigr) \, dt - Z^-_t\, dW_t^{\Qxx} - Z_t^{I,-} \, d\varpi_t^{I,\Qxx}  - Z_t^{C,-} \, d\varpi_t^{C,\Qxx}\\
    V_\tau^{-} & = \theta_\tau(\hat{V}) \ind_{\{\tau<T\}} + \Phi(S_T)\ind_{\{\tau = T\}} \end{array}\right.
  \end{equation}
We have the following main theorem
\begin{theorem}[\cite{BicCapSturm}]\label{thm:arb-price}
Assume that
  \begin{equation}\label{eq:A14}
    r_r^+ \leq r_f^+ \leq r_r^-, \qquad r_f^+ \leq r_f^-, \qquad r_f^+ \vee r_D < r^I + h_I^{\Px}, \qquad r_f^+ \vee r_D < r^C + h_C^{\Px},
  \end{equation}
and
  \begin{equation}\label{eq:comp}
 \rcp \vee \rcm \le \rfm \leq \bigl(r^I + h_I^{\Px}\bigr) \wedge \bigl( r^C + h_C^{\Px}\bigr).
  \end{equation}

If $V_0^- \leq V_0^+$ then there exist prices $\pi^{sup}$ and $\pi^{inf}$, $\pi^{inf} \leq \pi^{sup}$, (called hedger's upper and lower arbitrage price) for the claim $\Phi(S_T)$ such that all prices in the closed interval $[\pi^{inf}, \pi^{sup}]$ are free of hedger's arbitrage. In particular, we have that $\pi^{sup} = V_0^+$ and $\pi^{inf} = V_0^-$.
\end{theorem}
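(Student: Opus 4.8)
\emph{Sketch of the argument.} The plan is to show that $V_0^+$ is the hedger's minimal cost of super-replicating the payoff he owes after selling the claim, that $V_0^-$ is the corresponding quantity on the buyer's side, and then to read the band $[\pi^{inf},\pi^{sup}]=[V_0^-,V_0^+]$ off these two numbers. For $V_0^\pm$ to be meaningful one first verifies that the BSDEs \eqref{eq:BSDE-sell}--\eqref{eq:BSDE-buy} are well posed: the drivers $f^\pm(t,v,z,z^I,z^C;\hat V)$ are built from the rates and from the maps $x\mapsto x^\pm$, hence are uniformly Lipschitz and of linear growth in $(v,z,z^I,z^C)$, with bounded slopes in the $z^I$ and $z^C$ directions, so the structural condition required for comparison of BSDEs driven also by the compensated default martingales $\varpi^{I,\Qxx},\varpi^{C,\Qxx}$ is met; the terminal data $\theta_\tau(\hat V)\ind_{\{\tau<T\}}+\Phi(S_T)\ind_{\{\tau=T\}}$ lies in $L^2(\Qxx)$ because $\Phi$ is of polynomial growth, $\hat V(t,\cdot)$ inherits polynomial growth from its representation as a discounted $\Qxx$-expectation, and $L_I,L_C\le 1$. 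Existence and uniqueness of $(V^\pm,Z^\pm,Z^{I,\pm},Z^{C,\pm})$ on $[0,\tau]$ then follow from the standard theory for Lipschitz BSDEs with jumps and bounded random horizon (see \cite{Delong,Crepeyimm}), so $V_0^+$ and $V_0^-$ are well-defined reals.

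The pivotal structural fact is \eqref{eq:vtlast2}: every self-financing strategy $\bm\varphi$ gives a wealth process $V$ solving the \emph{forward} SDE with driver $f^+$ and with integrands $(Z,Z^I,Z^C)$ determined by $\bm\varphi$ through \eqref{eq:Zetas} (and $-V$ solves the forward SDE with $f^-$, since $f^-(t,v,z,z^I,z^C;\hat V)=-f^+(t,-v,-z,-z^I,-z^C;-\hat V)$). A strategy super-replicates the sold claim exactly when $V_\tau\ge\theta_\tau(\hat V)\ind_{\{\tau<T\}}+\Phi(S_T)\ind_{\{\tau=T\}}=V^+_\tau$; the comparison theorem for BSDEs with jumps, whose hypotheses were just checked, then gives $V_t\ge V^+_t$ on $[0,\tau]$, so $V_0\ge V_0^+$, and the strict comparison theorem forces $V\equiv V^+$ when $V_0=V_0^+$, i.e.\ exact replication. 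Hence $V_0^+$ is the minimal cost of super-replicating the sold claim, and the symmetric statement applied to $f^-$ and $V^-$ gives the analogous characterisation of $V_0^-$ for the purchased claim.

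To conclude: if the claim trades at $\pi>V_0^+$, the hedger sells it, runs the exact replicating strategy for $V^+$ at cost $V_0^+$, and parks the surplus $\pi-V_0^+>0$ in the funding account; at $\tau$ the replicating portfolio matches the liability $V^+_\tau$ and the surplus has value $(\pi-V_0^+)e^{\rfp\tau}>0$, a hedger's arbitrage, so $\pi^{sup}\le V_0^+$. Conversely, for $\pi\le V_0^+$ a seller's arbitrage would be, after absorbing the time-$0$ proceeds, a strategy with $V_0=\pi$ and $V_\tau\ge V^+_\tau$ with strict inequality on a set of positive probability; by the previous paragraph $V_0\ge V_0^+$, so $\pi\ge V_0^+$, whence $\pi=V_0^+$ and $V\equiv V^+$, contradicting the strict inequality; thus no such arbitrage exists and $\pi^{sup}=V_0^+$. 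The buyer's side is symmetric: for $\pi<V_0^-$ the hedger buys the claim, sets up the portfolio $-V^-$ replicating the offsetting short position, which generates $V_0^-$ of cash, and keeps $V_0^{-}-\pi>0$, again a hedger's arbitrage, while for $\pi\ge V_0^-$ none exists, so $\pi^{inf}=V_0^-$. Since $V_0^-\le V_0^+$ by hypothesis, $[V_0^-,V_0^+]$ is a nonempty interval on which, by the two one-sided arguments, every price is free of hedger's arbitrage, which is the claim.

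I expect the main obstacle to be the technical backbone of the two super-replication characterisations above: checking that $f^\pm$ satisfy the structural ($\gamma$-monotonicity) hypotheses under which the comparison and strict comparison principles are valid for BSDEs driven simultaneously by a Brownian motion and by the two compensated default indicators over a bounded random horizon, and --- to make the arbitrage arguments legitimate --- showing that the hypotheses \eqref{eq:A14} and \eqref{eq:comp} rule out hedger's arbitrage already in the market of the primary securities $(S,P^I,P^C,B^{r_r},B^{r_f})$ and render the exact replicating strategies admissible; that is precisely where those rate inequalities enter.
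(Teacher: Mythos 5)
The paper you are reading (Part II) does not contain a proof of this theorem: it is stated with the attribution ``\cite{BicCapSturm}'' and is proved in the companion Part I paper, not here. So there is no internal proof against which to check your attempt; I can only assess it on its own terms, using the setup and cross-references this paper provides.

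On that basis, your outline follows the route one would expect (and which the BSDE machinery the paper repeatedly invokes --- Theorems A.2 and A.3 of \cite{BicCapSturm} --- is set up to support): well-posedness of the BSDEs \eqref{eq:BSDE-sell}--\eqref{eq:BSDE-buy}, identification of $V_0^+$ and $V_0^-$ as minimal super-replication costs via the comparison theorem, and the two one-sided arbitrage arguments giving $\pi^{sup}=V_0^+$, $\pi^{inf}=V_0^-$. This is the standard structure and is likely close to the actual proof in Part I.

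There are, however, two genuine gaps you flag but do not close, and they are not formalities. First, to run the ``if $\pi\le V_0^+$ there is no seller's arbitrage'' direction you invoke a \emph{strict} comparison theorem to deduce $V\equiv V^+$ from $V_0=V_0^+$; for BSDEs driven by compensated jump martingales strict comparison requires a $\gamma$-monotonicity (or $A_\gamma$) condition on the driver with respect to $(z^I,z^C)$, and verifying this for $f^\pm$ is precisely where the inequalities $r_f^+\vee r_D < r^i+h_i^{\Px}$ and $\rfm\le (r^I+h_I^{\Px})\wedge(r^C+h_C^{\Px})$ enter (as the paper's own proof of Theorem~\ref{thm:PDE} hints, by noting that under \eqref{eq:A14} the map $f_1$ is increasing in $z^I,z^C$). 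Simply asserting that the drivers ``have bounded slopes in the $z^I,z^C$ directions'' is not enough --- Lipschitz alone does not give comparison for jump BSDEs. Second, the arbitrage argument assumes that the BSDE solution corresponds to an \emph{admissible} trading strategy in the primary market $(S,P^I,P^C,B^{r_r},B^{r_f})$ and that this primary market is itself free of hedger's arbitrage. You correctly identify these as ``where the rate inequalities enter,'' but this is exactly the substance of conditions \eqref{eq:A14}--\eqref{eq:comp} and should be proved rather than deferred; without it the step ``no strategy with $V_0<V_0^+$ can super-replicate'' is unjustified, since in a market with funding and repo asymmetries the absence of scalable arbitrage in the hedging instruments themselves must be established separately.

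In short: the architecture is right and likely matches the companion paper, but the proposal leaves open exactly the two verifications that make the theorem nontrivial --- the jump-BSDE comparison/strict-comparison hypotheses, and primary-market no-arbitrage plus admissibility --- and both are where the stated rate conditions are consumed.
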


The \textit{total valuation adjustment} $\XVA$ is defined as the amount that needs to be added to the Black-Scholes price (exclusive of funding costs) to get the actual price (inclusive of funding costs). This is asymmetric for sell- and buy-prices. \cite{BicCapSturm} define the $\XVA$s as follows.
\begin{definition}
The seller's $\XVA$ is the $\mathbb{G}$-adapted stochastic process $(\XVA_t^{sell})$ defined by
  \[
    \XVA_t^{sell} := V^+_t - \hat{V}(t,S_t)
  \]
while the buyer's $\XVA$ is defined as
  \[
    \XVA_t^{buy} := V^-_t - \hat{V}(t,S_t).
  \]
\end{definition}
$\XVA^{sell}$ corresponds to the total costs that the hedger incurs when replicating the payoff of a claim he sold, whereas $\XVA^{buy}$ corresponds to the total costs that he incurs when replicating the payoff of a claim he purchased. We note that the difference of the $\XVA$s also describes the width of the no-arbitrage interval, as
  \[
    \XVA_0^{sell} - \XVA_0^{buy} = V^+_0 - V^-_0.
  \]

\section{PDE representations of XVA} \label{sec:FK}

This section derives the PDE representations corresponding to the master BSDEs \eqref{eq:BSDE-sell} and~\eqref{eq:BSDE-buy}. For brevity, we will state everything with regards to the upper no-arbitrage price $V^{+}$, while noting that the treatment for $V^{-}$ is identical. For notational convenience, we will drop the plus superscript, and refer to $V^{+}, Z^{+}, Z^{I,+}$, $Z^{C,+}$ and $f^{+}$ simply as $V, Z, Z^{I}$, $Z^{C}$ and $f$.

\begin{remark} \label{remBSDE}
In the sequel, we write the solution of the BSDE \eqref{eq:BSDE-sell} on $\{t<\tau\}$ as $v(t,s,w^I ,w^C)$, where $ v(t, S_t, \varpi_t^{I,\Qxx},\varpi_t^{C,\Qxx})=V_t \ind_{\{\tau > t\}}$. The existence of such a measurable function $v$, i.e. the fact that $V$ is Markovian, is shown in Proposition 4.1.1 in \cite{Delong}.
\end{remark}

To simplify notation, we will not specify the argument of the latter functions. Then we can rewrite the BSDE in \eqref{eq:BSDE-sell} as
  \begin{align}
    -dV_t &=f(t, V_t, Z_t, Z_t^I, Z_t^C;\hat V_t)dt -  Z_t dW_t^{\Qxx}-Z_t^I d\varpi_t^{I,\Qxx} - Z_t^C d\varpi_t^{C,\Qxx} \label{eq:Delong-comp1}\\
    &=f(t, V_t, Z_t, Z_t^I, Z_t^C;\hat V_t)dt -  Z_t dW_t^{\Qxx}-\int_{\R} Z_t^I \tilde N^{I,\Qxx}(dt ,dr)  - \int_{\R}Z_t^C \tilde N^{C,\Qxx}(dt, dr),\nonumber\\
    V_\tau &= \theta_\tau(\hat{V}) \ind_{\{\tau<T\}} + \Phi(S_T)\ind_{\{\tau = T\}}\label{eq:Delong-comp3}
  \end{align}
where $\tilde N^{j,\Qxx},~j\in\{I,C\}$, are the compensated Poisson random measures such that
  \[
    \varpi_t^{j,\Qxx} = \int_0^t \int_{\R} \tilde N^{j,\Qxx}(ds, dr),~j\in\{I,C\},
  \]
on $[0,\tau]$ (we also refer to the proof of Theorem A.2 in \cite{BicCapSturm} for technical details). Let $\hat v(t,s)$ be the price of the claim at time $t$ conditioned on $S_t=s$, i.e. $\hat v(t,S_t) =\hat V_t$.

\begin{theorem}\label{thm:PDE}
Under the no-arbitrage conditions in \eqref{eq:A14}, $v$ is a viscosity solution of the following PDE:
  \begin{align}
    &-v_t- \sum_{j\in\{I,C\}} h_j^{\Qxx}\bigl( \theta_{j}(\hat v(t,s)) -v(t,s,w^I ,w^C)  - v_{j}\bigr)- r_D sv_s - \frac12\sigma^2 s^2v_{ss}\label{eq:PDE}\\
    &-f(t, v, \sigma sv_s(t,s,w^I ,w^C), \theta_{I}( \hat v(t,s))  - v(t,s,w^I ,w^C), \theta_{C}( \hat v(t,s)) -v(t,s,w^I ,w^C);\hat v(t,s))=0,\nonumber\\
    &v(T, s, \cdot, \cdot) = \Phi(s)\label{eq:PDE-bnd}.
  \end{align}
Here, we have used the notation $v_{i} = \frac{\partial v}{\partial w^i},~i\in\{I,C\},$ and
with slight abuse of notation $\theta_{i},~i\in\{I,C\}$, corresponds to the price of the contract at default of either party, similar to the way it was defined in \eqref{eq:theta}. Specifically,
  \begin{align*}
    \theta_{C}(\hat v) & := \hat v +   L_C ((1-\alpha) \hat{v} )^{-},  \\
    \theta_{I}(\hat v) & := \hat v - L_I  ((1-\alpha)\hat{v} )^{+}.
  \end{align*}
Additionally, $v$ is the unique viscosity solution of the PDE \eqref{eq:PDE} -- \eqref{eq:PDE-bnd} satisfying the growth condition $\lim\limits_{\abs{x}\to\infty} \abs{v(\cdot,e^x, \cdot, \cdot)} e^{-c\log^2\abs{x}}=0,~c>0.$
\end{theorem}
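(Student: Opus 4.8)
The plan is to derive the PDE representation from the Markovian BSDE \eqref{eq:Delong-comp1}--\eqref{eq:Delong-comp3} by combining a Feynman--Kac-type argument for viscosity solutions with the comparison-principle machinery of \cite{Delong}, adapted to account for the jump terms $d\varpi^{I,\Qxx}$, $d\varpi^{C,\Qxx}$ and the locality of viscosity solutions. First I would recall from Remark \ref{remBSDE} (Proposition 4.1.1 in \cite{Delong}) that $V_t\ind_{\{\tau>t\}}=v(t,S_t,\varpi_t^{I,\Qxx},\varpi_t^{C,\Qxx})$ for a measurable function $v$, and observe that on $\{t<\tau\}$ the driving martingales $\varpi^{j,\Qxx}$ are, up to the deterministic compensator $(1-H^j_u)h_j^{\Qxx}\,du$, pure-jump processes with a single jump of size $1$ at $\tau_j$. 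The key structural fact is that, conditionally on no default in $[t,t+dt)$, the triple $(S,\varpi^I,\varpi^C)$ evolves diffusively in $S$ with $\varpi^I,\varpi^C$ frozen, so the generator of the forward process applied to $v$ produces exactly $v_t+r_Dsv_s+\tfrac12\sigma^2s^2v_{ss}$ together with the jump contributions $\sum_{j}h_j^{\Qxx}\bigl(v(t,s,w^I+\Delta_j)-v(t,s,w^I,w^C)\bigr)$; but a jump in $\varpi^j$ coincides with the default of party $j$, at which time the BSDE's terminal condition forces $V_{\tau_j}=\theta_{\tau_j}(\hat V)$, so the post-jump value of $v$ must be replaced by $\theta_j(\hat v(t,s))$. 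Identifying $Z_t=\sigma S_tv_s$ and $Z_t^j=\theta_j(\hat v)-v$ (the jump in $V$ at $\tau_j$) then turns the generator identity into \eqref{eq:PDE}.

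To make this rigorous at the level of viscosity solutions I would argue as in the standard nonlinear Feynman--Kac proof: fix $(t_0,s_0,w_0^I,w_0^C)$ and a smooth test function $\varphi$ touching $v$ from above at that point; apply Itô's formula to $\varphi(t,S_t,\varpi_t^{I,\Qxx},\varpi_t^{C,\Qxx})$ on a short interval $[t_0,t_0+\delta]$ stopped before $\tau$; use the flow property of the BSDE solution together with the comparison theorem for BSDEs with jumps to compare $V$ with the solution of a BSDE driven by the smooth generator obtained from $\varphi$; and let $\delta\downarrow0$ to obtain the viscosity supersolution inequality. The subsolution inequality is symmetric. Here the ``locality'' remark in the introduction matters: because $\varpi^j$ has no small jumps and jumps only at the single time $\tau_j$, the nonlocal integral term degenerates into the finite difference $h_j^{\Qxx}(\theta_j(\hat v)-v-v_j)$ appearing in \eqref{eq:PDE}, and one only needs test functions locally rather than the global structure usually required for integro-PDEs. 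I would also note that $\hat v$ itself solves a linear Black--Scholes PDE (classical, smooth), so the coefficient $\theta_j(\hat v(t,s))$ and the driver $f(\cdot;\hat v)$ are continuous in $(t,s)$, which is what the viscosity-solution framework needs.

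For uniqueness within the class of functions satisfying the growth bound $\lim_{|x|\to\infty}|v(\cdot,e^x,\cdot,\cdot)|e^{-c\log^2|x|}=0$, I would pass to logarithmic coordinates $x=\log s$ so that the operator becomes uniformly parabolic with bounded coefficients in $x$, and then invoke a comparison principle for viscosity solutions of semilinear parabolic equations with this type of subquadratic-exponential growth — the relevant statement is the one underlying the uniqueness in \cite{Delong}, extended here to the present driver. The main point to check is that the nonlinearity, namely $f$ together with the finite-difference jump terms, is Lipschitz in $(v,v_s,v_j)$ uniformly in $(t,s)$: this follows directly from the piecewise-linear structure of $f^{+}$ (each term is $r^{\pm}$ times a positive/negative part) and from $0\le\alpha\le1$, $0\le L_I,L_C\le1$, so that $v\mapsto\theta_j(\hat v)$ is trivially Lipschitz and the map $v\mapsto f(t,v,\sigma sv_s,\theta_I(\hat v)-v,\theta_C(\hat v)-v;\hat v)$ is globally Lipschitz. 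One then builds the standard supersolution $e^{\lambda(T-t)}(1+\log^2|x|)$-type barrier to handle the growth at infinity and runs the doubling-of-variables argument; the Lipschitz property plus the no-arbitrage conditions \eqref{eq:A14} (which guarantee $r_f^+\le r_f^-$ etc., controlling the signs in $f$) give the contraction needed to conclude $v_1\le v_2$ and, by symmetry, equality. I expect the main obstacle to be precisely this comparison step in the presence of the jump/finite-difference terms coupled with the unbounded spatial domain and the merely piecewise-$C^1$ payoff $\Phi$; the jump terms are linear and hence benign, but verifying that the doubling-of-variables estimates survive the $e^{-c\log^2|x|}$ growth class — rather than the usual polynomial or bounded class — is the delicate part and is where the argument departs from the textbook treatment in \cite{Delong}.
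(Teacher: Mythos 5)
Your strategy for the viscosity-solution property --- write the solution as a Markovian function, set up a short-horizon BSDE on $[t_0,t_0+\delta]$ whose terminal datum is a smooth test function, apply It\^o to the test function, use the BSDE comparison theorem, and take $\delta\downarrow 0$ --- matches the paper's adaptation of Delong's Theorem 4.2.2 in essence, including the two claimed extensions (locality and stopping-time terminal). However, there are two substantive issues.

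First, the passage from the nonlocal test-function term to the finite difference $h_j^{\Qxx}\bigl(\theta_j(\hat v)-v-v_j\bigr)$ appearing in \eqref{eq:PDE} is glossed over, and this is the crux of how the jump condition enters. At the test-function level It\^o's formula produces increments of the form $\bar\Gamma^j(\phi) = \phi(\cdot,w^j+1,\cdot)-\phi(\cdot,w^j,\cdot)$, not $\theta_j(\hat v)-\phi$; there is no reason an arbitrary admissible test function satisfies $\phi(\cdot,w^j+1,\cdot)=\theta_j(\hat v)$. The paper first derives the subsolution inequality with $\bar\Gamma^j$ in place of $\theta_j(\hat v)-\phi$ (after a careful localization of $\phi$ guaranteeing $\phi(\cdot,w^j_0+1,\cdot)\ge\theta_j(\hat v)$ on a neighborhood, together with continuity of $v$), and then uses monotonicity of the shifted driver $f_1(\cdot,z^I,z^C)=f(\cdot,z^I,z^C)+h_I^{\Qxx}z^I+h_C^{\Qxx}z^C$ in $(z^I,z^C)$ --- which is exactly where the no-arbitrage conditions \eqref{eq:A14} are used --- to replace $\bar\Gamma^j$ by $\theta_j(\hat v)-\phi$. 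Your proposal simply asserts that ``the post-jump value of $v$ must be replaced by $\theta_j(\hat v)$'' without supplying this monotonicity argument; without it the subsolution inequality does not follow. (A minor slip: touching $v$ from above tests the \emph{sub}solution inequality, not the supersolution one, under the sign convention of \eqref{eq:PDE}.)

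Second, your uniqueness argument --- a PDE-side comparison principle via doubling of variables and a barrier adapted to the $e^{-c\log^{2}|x|}$ growth class --- is a genuinely different route from the paper's. The paper disposes of uniqueness in a single line by appealing to uniqueness of the BSDE solution via the BSDE comparison theorem (Theorem A.3 of \cite{BicCapSturm}). Your approach, if carried out, would give uniqueness entirely within the PDE framework and is therefore more self-contained, but it is far heavier, and you yourself correctly identify the verification of the doubling-of-variables estimates under the unusual growth constraint as the delicate part. The paper's shortcut avoids that machinery at the cost of relying on the probabilistic representation.
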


\begin{proof}
The following proof is an extension of the results from Theorem 4.2.2 in \cite{Delong} along two directions. The first is that, by contrast with Delong, we use the local property (as opposed to the global property) of the viscosity solution (cf. e,g, \cite[Section 7]{ShreveSoner}). The second is that we account for the occurrence of defaults, thus we need to extend the argument of  \cite{Delong} to the case when the terminal time is a stopping time, rather than constant. Concretely, we need to consider the solution on the set $\{t<\tau\}$. Let $\phi\in C^{1,2,2,2} \bigl([0,T] \times \Rplus \times \R \times \R \bigr)$ be a smooth function, such that $\phi\ge v$ and $\phi(t_0,s_0,w^I_0 ,w^C_0) = v(t_0,s_0,w^I_0 ,w^C_0)$ for some fixed $(t_0,s_0,w^I_0 ,w^C_0)$. Instead of working with $\phi$ directly, we will work with a local approximation. Let $h>0$ be small enough, such that $t_0+h< T$. As the default intensities $h_I^{\Qxx}, h_C^{\Qxx}$ are constant, it follows that $\Qxx[\tau > t_0]$ and $\Qxx[\tau > t_0 +h]$ are both strictly positive. Additionally, w.l.o.g. we may assume that $\phi$ is bounded. Specifically, let  $\{\phi_n\}_{n\ge1}$ be a subsequence such that $\phi_n\in C_b^{1,2,2,2}\bigl([0,T] \times \Rplus\times\R\times\R\bigr)$ is also bounded, satisfying $\phi_n\wedge n\vee(-n) =\phi\wedge n\vee(-n)$, and such that it converges to $\phi$ together with its derivatives uniformly on compacts. For some $n_0> \bigl\vert v(t_0,s_0,w^I_0 ,w^C_0) \bigr\vert$, we have that $\phi_{n_0} \ge v$ locally. Let $\mathcal U(t_0,s_0,w^I_0 ,w^C_0)$ be an open neighborhood of the point $(t_0,s_0,w^I_0 ,w^C_0)$ such that $\phi_{n_0}\ge v$  and (by possibly decreasing $h$) such that $(t, s_0,w^I_0 ,w^C_0)\in \mathcal U(t_0,s_0,w^I_0 ,w^C_0)$ for any $t\in[t_0, t_0+h].$ Next, choose
  \[
    n_1\ge n_0+\sup_{(t,s, w^I, w^C )\in\mathcal U(t_0,s_0,w^I_0 ,w^C_0)}\bigl\{ \bigl\vert \phi(t,s, w^I, w^C )\bigr\vert , \bigl\vert \theta_C(\hat v(t, s))\bigr\vert ,\bigl\vert \theta_I(\hat v(t, s))\bigr\vert  \bigr\}.
  \]
This allows us to make the further assumption that $(t_0, s_0,w^I_0+1 ,w^C_0)\in \mathcal U(t_0,s_0,w^I_0 ,w^C_0)$ and $(t_0, s_0,w^I_0 ,w^C_0+1)\in \mathcal U(t_0,s_0,w^I_0 ,w^C_0)$ by possibly extending the open neighborhood,  because we have that $\phi_{n_1+1}(t_0, s_0,w^I_0+1 ,w^C_0) \ge \theta_I(\hat v(t, s)),$ and $\phi_{n_1+1}(t_0, s_0,w^I_0 ,w^C_0+1) \ge \theta_C(\hat v(t, s))$. We now have that $\phi_{n_1+1}(t, S_t, \varpi_t^{I,\Qxx},\varpi_t^{C,\Qxx})\ge V_t, ~t\in[t_0, t_0+h]$, even if the default happens during this time interval. In the sequel, we will use $\phi_{n_1+1}$ in place of $\phi$, but we will abuse notation and continue referring to it as $\phi$.

For $t_0\le t\le t_0+h$ define $(\bar V,  \bar Z, \bar Z^I, \bar Z^C)$ to be a solution to the following BSDE:
  \begin{align*}
    &\bar V_{t\wedge\tau} \ind_{\{\tau > t\}} \define \phi((t_0+h)\wedge\tau,S_{(t_0+h)\wedge\tau},\varpi_{(t_0+h)\wedge\tau}^{I,{\Qxx}} ,\varpi_{(t_0+h)\wedge\tau}^{C,{\Qxx}})\ind_{\{\tau > t\}}\\
    &+\int_{t\wedge\tau}^{(t_0+h)\wedge\tau} f(r, \bar V_r, \bar Z_r, \bar Z_r^I, \bar Z_r^C;\hat V_r)dr - \int_{t\wedge\tau}^{(t_0+h)\wedge\tau} \bar Z_rdW^{^{\Qxx}}_r - \sum_{j\in\{I,C\}}  \int_{t\wedge\tau}^{(t_0+h)\wedge\tau}  \int_{\R}\bar Z_r^j\tilde N^{j,{\Qxx}}(dr, d\rho).
  \end{align*}
This BSDE has a unique solution by Theorem A.2 in \cite{BicCapSturm}, so $\bar V$ is well defined. By the comparison Theorem A.3 in \cite{BicCapSturm}, we obtain that $\bar V_{t\wedge\tau} \ind_{\{\tau > t\}}\ge V_{t\wedge\tau} \ind_{\{\tau > t\}}.$ 

For convenience, we define the operator
  \begin{align*}
    \mathcal L u(t, s, w^I ,w^C) &=  r_Dsu_s  + \int_\R \bigl( u(t, s, w^I+1 ,w^C)  - u(t, s, w^I ,w^C) - u_{I}(t, s, w^I ,w^C)\bigr)\nu^I(dz) \\
    &+ \frac{\sigma^2 s^2}2 u_{ss} +\int_\R\bigl( u(t, s, w^I ,w^C+1)  - u(t, s, w^I,w^C) - u_{C}(t, s, w^I ,w^C)\bigr)\nu^C(dz),
  \end{align*}
where the L\'evy measure $\nu^j(dx) = h_j^{\Qxx} \delta_1(dx), ~j\in\{I,C\},$ with $\delta_1$ being the Dirac measure concentrated at $1$. Additionally, let
  \begin{align*}
    \bar \Theta(t, s, w^I ,w^C) &= \phi_t(t, s, w^I ,w^C) + \mathcal L \phi(t, s, w^I ,w^C),\\
    \bar \Gamma^I(t, s, w^I ,w^C) &= \phi(t, s, w^I +1,w^C) - \phi(t, s, w^I ,w^C),\\
    \bar \Gamma^C(t, s, w^I ,w^C) &= \phi(t, s, w^I ,w^C+1) - \phi(t, s, w^I ,w^C),\\
    \dbarV_t &= \bar V_t - \phi(t, S_t, \varpi_t^{I,\Qxx} \varpi_t^{C,\Qxx}),\\
    \dbarZ_t &= \bar Z_t - \sigma S_t\phi_S(t, S_t, \varpi_{t-}^{I,\Qxx}, \varpi_{t-}^{C,{\Qxx}}),\\
    \dbarZ_t^j &= \bar Z_t^j - \bar{\Gamma}^j(t, S_t, \varpi_{t-}^{I,{\Qxx}}, \varpi_{t-}^{C,{\Qxx}}),~j\in\{I,C\}.
  \end{align*}
By It\^o's formula, we have that for $t_0\le t\le (t_0+h)$
  \begin{align*}
    &\phi(t\wedge\tau,S_{t\wedge\tau},\varpi_{t\wedge\tau}^{I,\Qxx} ,\varpi_{t\wedge\tau}^{C,\Qxx}) \ind_{\{\tau > t\}}=  \phi((t_0+h)\wedge\tau,S_{(t_0+h)\wedge\tau},\varpi_{(t_0+h)\wedge\tau}^{I,\Qxx} ,\varpi_{(t_0+h)\wedge\tau}^{C,\Qxx}) \ind_{\{\tau > t\}}\\
    &\quad- \int_{t\wedge\tau}^{(t_0+h)\wedge\tau} \bar\Theta(r, S_r,  \varpi_{r-}^{I,\Qxx}, \varpi_{r-}^{C,\Qxx}) dr -  \sigma \int_{t\wedge\tau}^{(t_0+h)\wedge\tau} S_r \phi_s(r, S_r, \varpi_{r-}^{I,\Qxx}, \varpi_{r-}^{C,\Qxx})dW_r^{\Qxx}\\
    &\quad-  \sum_{j\in\{I,C\}}\int_{t\wedge\tau}^{(t_0+h)\wedge\tau}  \int_{\R}\bar{\Gamma}^j(r, S_r, \varpi_{r-}^{I,\Qxx}, \varpi_{r-}^{C,\Qxx}) \tilde N^{j,\Qxx}(dr, d\rho).
  \end{align*}
It follows that
  \begin{align}
    \dbarV_{t\wedge\tau} \ind_{\{\tau > t\}}&= \int_{t\wedge\tau}^{(t_0+h)\wedge\tau} \Bigl( \bar\Theta(r, S_r, \varpi_{r-}^{I,\Qxx} , \varpi_{r-}^{C,\Qxx}) + f(r, \dbarV_{r} + \phi, \dbarZ_{r} + \sigma S_r\phi_s, \dbarZ_{r}^I + \bar{\Gamma}^I, \dbarZ_r^C+ \bar{\Gamma}^C;\hat V_r) \Bigr) dr\nonumber\\
    &-\int_{t\wedge\tau}^{(t_0+h)\wedge\tau}\dbarZ_rdW_r^{\Qxx} -   \sum_{j\in\{I,C\}}\int_{t\wedge\tau}^{(t_0+h)\wedge\tau}\int_{\R}\dbarZ_r^j\tilde N^{j,\Qxx}(dr, d\rho).\label{eq:viscosity1}
  \end{align}
Now, assume by contradiction that $\phi$ violates the subsolution property, i.e. there exists $\epsilon>0$ such that
  \begin{align*}
    &-\phi_t(t_0,s_0,w^I_0 ,w^C_0) - \mathcal L \phi(t_0,s_0,w^I_0 ,w^C_0) \\
    &-f(t_0,\sigma s_0\phi_s(t_0,s_0,w^I_0 ,w^C_0), \bar{\Gamma}^I(t_0,s_0,w^I_0 ,w^C_0), \bar{\Gamma}^C(t_0,s_0,w^I_0 ,w^C_0); \hat v(t_0, s_0))>\epsilon.
  \end{align*}
By continuity, and by possibly reducing $h$ and the open neighborhood of $(t_0,s_0,w^I_0 ,w^C_0)$, we may assume that it is also true inside the neighborhood $\mathcal U(t_0,s_0,w^I_0 ,w^C_0)$ of $(t_0,s_0,w^I_0 ,w^C_0)$, while still assuming that $(t, s_0,w^I_0 ,w^C_0)\in \mathcal U(t_0,s_0,w^I_0 ,w^C_0)$ for any $t\in[t_0, t_0+h].$

Next, define
  \begin{align*}
    I_{h} \define \frac{1}{h}\E^{\Qxx} \biggl[ \int_{t_0\wedge\tau}^{(t_0+h)\wedge\tau} \Psi(t,S_t,\varpi_t^{I,\Qxx} \varpi_t^{C,\Qxx})dt\biggr],
  \end{align*}
where
  \begin{align*}
    \Psi(t,s,w^I ,w^C) \define&\phi_t(t,s,w^I ,w^C)  + \mathcal L \phi(t,s,w^I ,w^C)\\
    &+ f(t, \phi,\sigma s\phi_s(t,s,w^I ,w^C), \bar{\Gamma}^I(t,s,w^I ,w^C), \bar{\Gamma}^C(t,s,w^I ,w^C);\hat v(t,s)).
  \end{align*}
Since $f$ is Lipschitz, we can show in a similar way as in \cite{Delong} that for $t_0\le t\le t_0+h$ it holds that
  \begin{align}
    &\abs{\Psi(t, s, w^I ,w^C)} \le C (1+\abs{s^2}),\nonumber\\
    &\E^{\Qxx}\Bigl[\abs{\dbarV_{t\wedge\tau}}\ind_{\{\tau > t\}}\Bigr] \le C h^\frac12,\label{eq:estim1}\\
    &\E^{\Qxx}\biggl[\int_{t_0\wedge\tau}^{(t_0+h)\wedge\tau}\abs{\dbarZ_r}^2dr  + \sum_{j\in\{I,C\}}\int_{t_0\wedge\tau}^{(t_0+h)\wedge\tau}\int_{\R}\abs{\dbarZ_r^j}^2 \nu^j(d\rho)dr\biggr] \le C h^\frac32,\label{eq:estim2}\\
    &\Qxx[\tau_1 \le (t_0+h)\wedge\tau] \le Ch,\label{eq:estim3}
  \end{align}
where we have introduced another stopping time
  \begin{align*}
    \tau_1\define \inf\bigl\{ t\ge t_0 \colon (t,S_t, \varpi_t^{I,\Qxx}, \varpi_t^{C,\Qxx}) \not\in \mathcal U(t_0,s_0,w^I_0 ,w^C_0) \bigr\}.
  \end{align*}

From $q_1\define \Qxx[ \tau >t_0]  >0$, inequality~\eqref{eq:estim3} and the fact that the default intensities $h_I^{\Qxx}, h_C^{\Qxx}$ are constant, so that $ \Qxx[ \tau \in (t_0,t_0+h)] \le Ch$, it also follows that
  \begin{align*}
    &\Qxx\bigl[\{\tau_1>(t_0+h)\wedge\tau \} \cap \{  \tau \ge t_0+h \}\bigr]\\
    &\ge 1- \Qxx[ \tau \le t_0] - \Qxx[\tau_1\le (t_0+h)\wedge\tau] - \Qxx[ \tau \in (t_0,t_0+h)]\ge q_1 - Ch,
  \end{align*}
and similarly that
  \begin{align*}
    \Qxx\bigl[\{\tau_1 \le (t_0+h)\wedge\tau\} \cup \{  \tau \in(t_0,t_0+h)\}\bigr] \le Ch.
  \end{align*}
Here and throughout this proof, we will abuse notation and use $C>0$ to denote a generic constant, which may be different in each inequality below. It follows that
  \begin{align*}
    I_{h} &= \frac1{h}\E^{\Qxx} \biggl[ \int_{t_0\wedge\tau}^{(t_0+h)\wedge\tau} \Psi(t,S_t,\varpi_t^{I,\Qxx} ,\varpi_t^{C,\Qxx})dt \ind_{\{\tau_1 >( t_0+h)\wedge\tau\} \cap \{  \tau \not\in(t_0,t_0+h)\} } \biggr] \\
    &\quad+ \frac1{h}\E^{\Qxx} \biggl[ \int_{t_0\wedge\tau}^{(t_0+h)\wedge\tau} \Psi(t,S_t,\varpi_t^{I,\Qxx} ,\varpi_t^{C,\Qxx})dt \ind_{\{\tau_1 \le (t_0+h)\wedge\tau\} \cup \{  \tau \in(t_0,t_0+h)\}} \biggr]\\
    &\le -\epsilon \Qxx\bigl[\{\tau_1>(t_0+h)\wedge\tau\} \cap \{  \tau \ge t_0+h \}\bigr] \\
    &\quad+ \frac1{h}\sqrt{\E^{\Qxx} \biggl[ \int_{t_0}^{t_0+h} 1dt \ind_{\{\tau_1 \le (t_0+h)\wedge\tau\} \cup \{  \tau \in(t_0,t_0+h)\}} \biggr] \E^{\Qxx} \biggl[   \int_{t_0\wedge\tau}^{(t_0+h)\wedge\tau} \Psi^2(t,S_t,\varpi_t^{I,\Qxx} ,\varpi_t^{C,\Qxx})dt  \biggr]}\\
    &\le -\epsilon \Qxx\bigl[\{\tau_1>(t_0+h)\wedge\tau\} \cap \{  \tau \ge t_0+\}\bigr] \\
    &\quad+ \frac{ \sqrt{h\Qxx\bigl[\{\tau_1 \le (t_0+h)\wedge\tau \} \cup \{  \tau \in(t_0,t_0+h) \}\bigr]} }{h} \sqrt{\E^{\Qxx} \biggl[ \int_{t_0\wedge\tau}^{(t_0+h)\wedge\tau} \Psi^2(t,S_t,\varpi_t^{I,\Qxx} ,\varpi_t^{C,\Qxx})dt  \biggr]}\\
    &\le -\epsilon (q_1-Ch)+ C \sqrt{{h}} \sqrt{1+\E^{\Qxx} \biggl[ \sup_{t \in [t_0\wedge\tau, (t_0+h)\wedge\tau]} S_t^4\biggr]}.
  \end{align*}
This shows that for sufficiently small $h>0$, we have that $I_h\le -\frac{\epsilon q_1}2.$

From the comparison Theorem A.3 in \cite{BicCapSturm} we get that $\dbarV_{t_0\wedge\tau} \ind_{\{\tau > t_0\}}\ge0$. Using this along with \eqref{eq:viscosity1}, we conclude that
  \begin{align*}
    \frac\epsilon2&\le \abs{\frac1{h}\dbarV_{t_0\wedge\tau}\ind_{\{\tau > t_0\}} - I_{h}}\\
    &=\frac1{h}\Bigg\vert \E^{\Qxx}\biggl[ \int_{t_0\wedge\tau}^{(t_0+h)\wedge\tau}
    f(r, \dbarV_r + \phi, \dbarZ_r + \sigma S_r \phi_s, \dbarZ_r^I + \bar{\Gamma}^I, \dbarZ_r^C+ \bar{\Gamma}^C;\hat V_r)-f(r, \phi,  \sigma S_r \phi_s,  \bar{\Gamma}^I,  \bar{\Gamma}^C;\hat V_r)dr\biggr]\Bigg\vert  \\
    &\le C \sup_{r\in[t_0,t_0+h]} \E^{\Qxx}\biggl[ \abs{\dbarV_{r\wedge\tau}}\ind_{\{\tau > r\}}\biggr] \\
    &\quad + \frac{C}{\sqrt{h}}\Biggl( \sqrt{ \E^{\Qxx}\biggl[ \int_{t_0\wedge\tau}^{(t_0+h)\wedge\tau}\abs{ \dbarZ_r}^2dr\biggr]} +  \sum_{j\in\{I,C\}}  \sqrt{ \E^{\Qxx}\biggl[\int_{t_0\wedge\tau}^{(t_0+h)\wedge\tau} \int_{\R}\abs{\dbarZ_r^j}^2 \nu^j(d\rho)dr\biggr]} \Biggr)\\
    &\le C( h^\frac12 + h^\frac14),
  \end{align*}
where the second inequality follows from the fact that the driver $f$ of our BSDE is Lipschitz along with the H\"{o}lder's inequality (setting the exponent to $2$), while the last inequality follows from the estimates given by the inequalities~\eqref{eq:estim1} and~\eqref{eq:estim2}. This yields a contradiction, as we let $h\searrow0$. Hence $\phi$ does not violate the subsolution property.

In order to incorporate the jump condition \eqref{eq:Delong-comp3} at the jump time $\tau$, note that we can rewrite the subsolution property for $\phi$ as
  \begin{align}
    &-\phi_t(t_0,s_0,w^I_0 ,w^C_0) + \sum_{j\in\{I,C\}} h_j^\Qxx\phi_j(t_0,s_0,w^I_0 ,w^C_0)- r_D s\phi_s - \frac12\sigma^2s^2 \phi_{ss} \label{eq:viscosity3}\\
    &-f_1(t_0, \phi,\sigma s_0\phi_s(t_0,s_0,w^I_0 ,w^C_0), \bar{\Gamma}^I(t_0,s_0,w^I_0 ,w^C_0), \bar{\Gamma}^C(t_0,s_0,w^I_0 ,w^C_0);\hat v(t_0,s_0))\le0,\nonumber
  \end{align}
where we set $f_1(t, v,  z,  z^I,  z^C;\hat v(t,s))\define f (t,  s, v,  z,  z^I,  z^C, \hat v(t,s))+h_I^\Qxx z^I  + h_C^\Qxx z^C.$
Under the no arbitrage conditions given by \eqref{eq:A14}, it follows that $f_1$ is an increasing function in both arguments $z^I$ and $z^C.$ By continuity of $v$ (see e.g. Lemma 4.1.1 of \cite{Delong}), it follows that $ \phi(\tau_I,S_{\tau_I},\varpi_{\tau_I-}^{I,\Qxx}+1,\varpi_{\tau_I-}^{C,\Qxx})\ind_{\{\tau_I < \tau_C \wedge T\}}\ge V_{\tau_I}\ind_{\{\tau_I < \tau_C \wedge T\}} = \theta_{I}( \hat V_{\tau_I})\ind_{\{\tau_I < \tau_C \wedge T\}},$ and similarly
$ \phi(\tau_C,S_{\tau_C},\varpi_{\tau_C-}^{I,\Qxx},\varpi_{\tau_C-}^{C,\Qxx}+1)\ind_{\{\tau_C < \tau_I \wedge T\}}\ge V_{\tau_C}\ind_{\{\tau_C < \tau_I \wedge T\}} = \theta_{C}( \hat V_{\tau_C})\ind_{\{\tau_C < \tau_I \wedge T\}}.$ Together with \eqref{eq:viscosity3} it follows that
  \begin{align*}
    &-\phi_t(t_0,s_0,w^I_0 ,w^C_0) + \sum_{j\in\{I,C\}} h_j^\Qxx\phi_j(t_0,s_0,w^I_0 ,w^C_0)- r_D s\phi_s - \frac12s^2\sigma^2 \phi_{ss}\\
    &- h_I^\Qxx  ( \theta_{I}( \hat v)  - \phi(t_0,s_0,w^I_0 ,w^C_0))- h_C^\Qxx  (\theta_{C}( \hat v) -\phi(t_0,s_0,w^I_0 ,w^C_0))\\
    &-f(t_0, \phi,\sigma s_0 \phi_s(t_0,s_0,w^I_0 ,w^C_0), \theta_{I}(\hat v)  - \phi(t_0,s_0,w^I_0 ,w^C_0), \theta_{C}(\hat v) -\phi(t_0,s_0,w^I_0 ,w^C_0); \hat v(t_0,s_0))\le0.
  \end{align*}
This shows that $\phi$ is a subsolution of the PDE~\eqref{eq:PDE} as claimed.

Finally, the uniqueness result, follows from the uniqueness of the solution to the BSDE, which in turn follows from the comparison Theorem A.3 in
\cite{BicCapSturm}.
\end{proof}

\begin{remark}\label{remark:PDE1}
Since we are only concerned with $V_t$ before any default occurs, there is no need to keep track of the martingale terms $\varpi_t^{j,\Qxx}$'s. These are only needed to realize that a default has happened. Otherwise, $\varpi^{j,\Qxx}_t = -h_j^\Qxx t.$ It then follows that $v$ reduces to a function of only two variables, so that we can simply define $\vv(t, S_t) =V_t \ind_{\{\tau>t\}}.$ In other words, $\vv(t,s) = v(t,s, -h_I^\Qxx t, -h_C^\Qxx t).$
In this case, the PDE \eqref{eq:PDE}-\eqref{eq:PDE-bnd} becomes
  \begin{align}
    &-\vv_t + (h_I^\Qxx+h_C^\Qxx)\vv(t,s)  - r_D s\vv_s - \frac12\sigma^2 s^2\vv_{ss}\label{eq:PDE1}\\
    &\quad-f(t, \vv, \sigma s\vv_s(t,s), \theta_{I}( \hat v(t,s))  - \vv(t,s) \theta_{C}( \hat v(t,s)) -\vv(t,s);\hat v(t,s))=\sum_{j\in\{I,C\}} h_j^\Qxx \theta_{j}( \hat v(t,s)),\nonumber\\
    &\vv(T, s) = \Phi(s)\label{eq:PDE-bnd1}.
  \end{align}
\end{remark}

\begin{remark}\label{remark:PDE2}
Additionally, we will also employ the standard change of variables $x=\log s, $ so that $\bar w(t,x) = \bar{v}(t,e^x)$ and $\hat w(t,x) = \hat v(t, e^x)$. Note that this change of variables allows us to get rid of boundary conditions at $S=0.$
Then, the PDE \eqref{eq:PDE1} together with the boundary condition \eqref{eq:PDE-bnd1} becomes
  \begin{align}
    &-\bar w_t - \Bigl(r_D - \frac{\sigma^2}{2}\Bigr) \bar w_x - \frac12\sigma^2\bar w_{xx} + \bigl(h_I^\Qxx+h_C^\Qxx\bigr)\bar w \label{eq:PDE2}\\
    & -f\bigl(t, \bar w, \sigma \bar w_x(t,x), \theta_{I}( \hat w(t, x))  - \bar w(t,x), \theta_{C}( \hat w(t, x)) -\bar w(t,x);\hat w(t, x)\bigr) = \sum_{j\in\{I,C\}} h_j^\Qxx \theta_{j}\bigl( \hat w(t, x) \bigr),\nonumber\\
    &\bar w(T, x) = \Phi(e^x)\label{eq:PDE-bnd2}.
  \end{align}
Hence, we can express the whole pricing problem as a Cauchy problem for a two-dimensional system of semilinear PDEs:
  \begin{align*}
    -\bar w_t + \mathcal{L}^1 \bar w &= f(t, \bar w, \sigma \bar w_x, \theta_{I}( \hat w)  - \bar w, \theta_{C}( \hat w) -\bar w;\hat w)\\
    - \hat{w}_t + \mathcal{L}^2 \hat{w} & = 0\\
    \bar w(T, x)  = \hat{w}(T, x) &= \Phi(e^x),
  \end{align*}
where the differential operators are defined by
  \begin{align*}
    \mathcal{L}^1 & := - \Bigl(r_D - \frac{\sigma^2}{2}\Bigr) \partial_x-\frac{\sigma^2}{2} \partial_{xx} + \bigl(h_I^\Qxx+h_C^\Qxx\bigr) \cdot - \sum_{j\in\{I,C\}} h_j^\Qxx \theta_{j}( \hat w)\\
    \mathcal{L}^2 & : =- \Bigl(r_D - \frac{\sigma^2}{2}\Bigr) \partial_x-\frac{\sigma^2}{2} \partial_{xx}
  \end{align*}
\end{remark}

It turns out that the PDE \eqref{eq:PDE2}-\eqref{eq:PDE-bnd2} (resp. \eqref{eq:PDE1}-\eqref{eq:PDE-bnd1}) not only has a unique viscosity solution, but this solution is also a classical solution if we assume that $\Phi$ and $\Phi'$ (where defined) have at most polynomial growth, i.e., $\abs{\Phi(s)} \le C(1+s^n) ,\abs{\Phi'(s)} \le C(1+s^n),~s\in\Rplus$ The classical argument of Theorem 20.2.1 in \cite{Cannon} assumes a bounded terminal condition, but we can employ a change of variables and divide by $(1+s^{2n})$ to utilize their framework:
\begin{align}
\uu(t,x) = \bigl(1+e^{2nx} \bigr) \bar w (t,x), \quad \hv(t,x) = \bigl(1+e^{2nx} \bigr)\hat w(t,x), \quad \bar\Phi(x) = \bigl(1+e^{2nx} \bigr)\Phi(e^x).\label{eq:var-transform}
\end{align}
In this case, our PDE  \eqref{eq:PDE1}-\eqref{eq:PDE-bnd1} becomes:
  \begin{align}
    &-\uu_t- \frac12\sigma^2 \uu_{xx} + \biggl( 2n\sigma^2\frac{e^{(2n-1)x}}{1+e^{2nx}} \uu_x -r_D \uu_x\biggr) \label{eq:PDE3}\\
    & +\biggl( h_I^\Qxx+h_C^\Qxx + 2nr_D \frac{e^{(2n-1)x}}{1+e^{2nx}} +n\sigma^2\frac{e^{(2n-2)x}}{1+e^{2nx}}\Bigl((2n-1) -4n\frac{e^{2nx}}{1+e^{2nx}}\Bigr)\biggr)\uu\nonumber\\
    & -f\biggl(t, \uu, \sigma \uu_x - 2n\sigma\frac{e^{(2n-1)x}}{1+e^{2nx}}\uu, \theta_{I}(  \hv(t,x))  - \uu, \theta_{C}( \hv(t, x)) -\uu ; \hv(t,x)\biggr) = \sum_{j\in\{I,C\}} h_j^\Qxx \theta_{j}( \hv(t, x)),\nonumber\\
    &\uu(T, x) = \bar\Phi(x)\label{eq:PDE-bnd3}.
  \end{align}

We can then prove the following
\begin{proposition}\label{prop:sol-exists}
Assume that $\Phi$ is piecewise continuously differentiable and $\Phi$ as well as $\Phi'$ (where defined) have at most polynomial growth, i.e.,  $\abs{\Phi(s)} \le C(1+s^n),\abs{\Phi'(s)} \le C(1+s^n),~s\in\Rplus.$ Then the PDE \eqref{eq:PDE2} with terminal condition \eqref{eq:PDE-bnd2} has a classical solution.
\end{proposition}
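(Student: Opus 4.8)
\emph{Proof proposal.} The plan is to exploit the triangular structure of the decoupled system written out after Remark~\ref{remark:PDE2}: first solve the linear equation for $\hat w$, then feed the resulting $\hat w$ into the scalar semilinear equation for $\bar w$ as a known coefficient, and finally reduce the latter to the bounded-data situation of Theorem~20.2.1 in \cite{Cannon} via the weight change \eqref{eq:var-transform}. For the linear component, $\hat w$ solves $-\hat w_t + \mathcal L^2\hat w = 0$ with $\hat w(T,x) = \Phi(e^x)$, which in the variable $s=e^x$ is the Black--Scholes Cauchy problem for $\hat v(t,s) = e^{-r_D(T-t)}\E^{\Qxx}[\Phi(S_T)\mid S_t = s]$. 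Since $\log S_T$ is Gaussian given $S_t$ and $\Phi$ is continuous of polynomial growth, $\hat w$ is given explicitly by a Gaussian convolution; differentiating under the integral (justified by Gaussian tails against the polynomial growth of $\Phi$) shows $\hat w\in C^{\infty}\bigl([0,T)\times\R\bigr)$ with every derivative of polynomial growth in $x$ locally uniformly in $t$, and $\hat w\in C\bigl([0,T]\times\R\bigr)$ with $\hat w(t,\cdot)\to\Phi(e^{\cdot})$ locally uniformly as $t\uparrow T$; using that $\Phi$ is piecewise $C^1$ with $\Phi'$ of polynomial growth, $x\mapsto\hat w(t,x)$ is moreover Lipschitz on compacts uniformly in $t\in[0,T]$. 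Thus $\hat w$ is already a classical solution, and we may treat it as a known datum.

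With $\hat w$ fixed, the $\bar w$-equation of Remark~\ref{remark:PDE2} reads
\[
 -\bar w_t - \Bigl(r_D-\tfrac{\sigma^2}{2}\Bigr)\bar w_x - \tfrac{\sigma^2}{2}\bar w_{xx} + \bigl(\hIQ+\hCQ\bigr)\bar w \;=\; F(t,x,\bar w,\bar w_x),\qquad \bar w(T,x)=\Phi(e^x),
\]
where $F(t,x,v,p)\define f\bigl(t,v,\sigma p,\theta_{I}(\hat w(t,x))-v,\theta_{C}(\hat w(t,x))-v;\hat w(t,x)\bigr)+\sum_{j\in\{I,C\}}h_j^{\Qxx}\theta_{j}(\hat w(t,x))$. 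Since $f$ is globally Lipschitz in $(v,z,z^I,z^C;\hat v)$ (it is piecewise linear with slopes bounded by the rates and $1/\sigma$) and $\hat v\mapsto\theta_{I}(\hat v),\theta_{C}(\hat v)$ are piecewise linear, $F$ is globally Lipschitz in $(v,p)$, with its $(t,x)$-dependence entering only through the function $\hat w$, which by the previous step is locally Lipschitz in $(t,x)$ and of polynomial growth. Now apply \eqref{eq:var-transform}: because $1+e^{2nx}$ is smooth and bounded below by $1$, the map $\bar w\mapsto\uu$ is a bijection preserving the $C^{1,2}$ class, and $\bar w$ solves \eqref{eq:PDE2}--\eqref{eq:PDE-bnd2} iff $\uu$ solves \eqref{eq:PDE3}--\eqref{eq:PDE-bnd3}. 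The role of the weight is that the transformed terminal datum $\bar\Phi$ is now bounded (and, invoking the polynomial growth of $\Phi'$, bounded Lipschitz, after enlarging $n$ so that $n\ge1$ if necessary), the extra drift and potential terms created by the transformation (the ratios $e^{(2n-1)x}/(1+e^{2nx})$ and $e^{(2n-2)x}/(1+e^{2nx})$ appearing in \eqref{eq:PDE3}) are bounded and smooth, and the nonlinearity stays Lipschitz in $(\uu,\uu_x)$. These are precisely the structural hypotheses under which Theorem~20.2.1 of \cite{Cannon} produces a classical solution $\uu\in C\bigl([0,T]\times\R\bigr)\cap C^{1,2}\bigl([0,T)\times\R\bigr)$ of \eqref{eq:PDE3}--\eqref{eq:PDE-bnd3}; the a priori $L^{\infty}$ bound needed to run Cannon's fixed-point/continuation argument may be taken from the comparison Theorem~A.3 in \cite{BicCapSturm}, or equivalently from the polynomial bound on the viscosity solution of Theorem~\ref{thm:PDE}. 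Inverting \eqref{eq:var-transform} then gives a classical solution $\bar w\in C^{1,2}\bigl([0,T)\times\R\bigr)\cap C\bigl([0,T]\times\R\bigr)$ of \eqref{eq:PDE2}--\eqref{eq:PDE-bnd2}, which together with the classical $\hat w$ solves the full system. (A posteriori, since a polynomially bounded classical solution is a viscosity solution, Theorem~\ref{thm:PDE} identifies this $\bar w$ with $v$, so the viscosity solution is itself classical.)

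The delicate point is the last step: verifying that after the weighting \eqref{eq:var-transform} the Cauchy problem \eqref{eq:PDE3}--\eqref{eq:PDE-bnd3} genuinely falls within the scope of Theorem~20.2.1 in \cite{Cannon} on the whole line --- bounded terminal datum, bounded lower-order coefficients, and a uniform a priori sup-norm bound --- since \cite{Cannon} is formulated for bounded data, and this is exactly where the algebraic form of the transformed equation is essential. Should the direct verification prove awkward (e.g.\ because $f$ and $\Phi$ are only Lipschitz and piecewise $C^1$ rather than smooth), a robust alternative is to mollify $f$ and $\Phi$, solve the resulting smooth problems classically, derive uniform sup-norm bounds from comparison together with interior parabolic Schauder and $W^{2,p}_{\mathrm{loc}}$ estimates, extract a subsequential limit, and identify it with $v$ via the stability of viscosity solutions and the uniqueness in Theorem~\ref{thm:PDE}; a final bootstrap on $(0,T)\times\R$ upgrades the regularity to $C^{1,2}$.
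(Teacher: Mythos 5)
Your proposal follows the paper's own route: use the weight change \eqref{eq:var-transform} to convert the polynomial-growth terminal datum into a bounded one, verify that the resulting Cauchy problem \eqref{eq:PDE3}--\eqref{eq:PDE-bnd3} has bounded coefficients and a Lipschitz nonlinearity, and invoke Theorem~20.2.1 of \cite{Cannon}; your preliminary discussion of the triangular structure and the explicit Gaussian smoothing of $\hat w$ makes explicit what the paper leaves implicit. The only place you genuinely diverge is the treatment of a merely piecewise-$C^1$ terminal condition: the paper handles this by modifying Cannon's construction following \cite{JK}, whereas you propose mollifying $f$ and $\Phi$, passing to a limit via interior Schauder/$W^{2,p}_{\mathrm{loc}}$ estimates and comparison, and identifying the limit with the viscosity solution of Theorem~\ref{thm:PDE}; both are viable, and your version has the advantage of spelling out where the a~priori sup-norm bound required for Cannon's argument comes from.
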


\begin{proof}
Assume first that $\Phi$ is continuosly differentiable. Using the transformation \eqref{eq:var-transform} given above, it is then suffices to prove that \eqref{eq:PDE3}-\eqref{eq:PDE-bnd3} has a classical solution. The above transformation guarantees that both $\bar \Phi$ and $\bar\Phi'$ are bounded. Then the existence of a smooth (and bounded) solution to \eqref{eq:PDE3}-\eqref{eq:PDE-bnd3} follows from Theorem 20.2.1 in \cite{Cannon}. In the case that $\bar\Phi$ is only piecewise smooth, the original proof can be modified following a similar procedure to \cite{JK}. Hence, using the change of variables \eqref{eq:var-transform}, we conclude that there exists a classical solution to the PDE \eqref{eq:PDE2} with terminal condition \eqref{eq:PDE-bnd2}.
\end{proof}

Combining Proposition~\ref{prop:sol-exists} with the uniqueness result from Theorem \ref{thm:PDE} allows us to conclude that there exists a unique classical solution to the PDE with polynomial growth. Moreover, the hedging strategies can be easily found by an application of Theorem \ref{thm:strat}. Using the relations in~\eqref{eq:Zetas} we find that on the set $\{t<\tau\}$
  \begin{align}
    \nonumber \xi_t &= \frac{Z_t}{\sigma S_t} = \vv_S(t, S_t),\\
    \xi_{t}^j&=-\frac{Z_t^j}{P_{t}^j} = \frac{\vv(t, S_t) - \theta_j(\hat v(t,S_t))}{e^{-(r_D + h_j^{\Qxx}) (T-t)}},~j\in\{I,C\}. \label{eq:strat}
  \end{align}
where we have used that the bond price $P_{t}^j = e^{(r_D + h_j^{\Qxx}) t}$ on $\tau_j > t$, by virtue of \eqref{eq:priceproc} and the relations between the measures $\Px$ and $\Qxx$ given in \eqref{eq:relationsmeasures}.

\section{Numerical Analysis} \label{sec:numanalysis}

We perform a comparative statics analysis to analyze the dependence of XVA and portfolio replicating strategies on funding rates, default intensities, and collateral levels. We consider the relative XVA, i.e. express the adjustment as a percentage of the price $\hat{V}_t$ of the claim, given by $\frac{V_t - \hat{V}_t}{\hat{V}_t}$, where $V_t =V_t^{\pm}$ depending on whether we are considering buyer{'}s or seller{'}s XVA. The claim is chosen to be a European-style call option on the stock security, i.e. $\Phi(x) = (x-K)^+$. We consider one at-the-money option, with $S_0=K=1$ maturing at $T=1$. In order to focus on the impact of funding costs (which in practice is the most relevant) and separate it from additional contributions to the XVA coming from asymmetries in collateral and repo rates, we set $r_r^+ = r_r^- = 0.05$, and $r_c^+ = r_c^- = 0.01$.
As the derivatives contract does not only specify the price of the option but also the levels of collateralization of the deal, the no-arbitrage region appears as a (two-dimensional) band in $\XVA$ and $\alpha$ rather than as a (one-dimensional) interval in $\XVA$ only.

We also use the following other benchmark parameters: $\sigma = 0.2$, $r_f^+ = 0.05$, $r_f^- = 0.08$, $r_D = 0.01$, $r^I = 0.03$, $r^C = 0.04$,  $h_I^{\Qxx} = 0.2$, $h_C^{\Qxx} = 0.15$, $L_I = L_C = 0.5$, and $\alpha = 0.9$. We compute the numerical solution of the PDE using a finite difference Crank-Nicholson scheme.

The main finding of our analysis are discussed in the sequel:

\paragraph{Higher funding rates increase the width of the no-arbitrage band.}

Figure \ref{fig:alpharfm} displays the no-arbitrage band whose width is increasing in the funding rate $r_f^-$. As $\alpha$ gets higher, the band noticeably shrinks reaching its minimum around $\alpha = 80\%$ before widening again. Notice that buyer{'}s and seller{'}s XVA do not have a symmetric behavior. This can be better understood by analyzing the dependence of the band on the collateral level $\alpha$ in Figure \ref{fig:alpharfm}. If $\alpha$ is not too high ($\alpha < 0.5$), the widening of the no-arbitrage band with respect to the funding rate $r_f^-$ is due to decreasing buyer{'}s XVA. On the other hand, if $\alpha$ is high the buyer{'}s XVA is insensitive to changes in $r_f^-$ whereas the seller{'}s XVA increases with $r_f^-$, contributing to widen the no-arbitrage band. This is further supported by the numerical values reported in Table \ref{tab:Tablealrfm}. When $\alpha < 0.5$, the position in the funding account for the seller{'}s XVA is long and the same regardless of the funding rate $r_f^-$. On the other hand, the size of the long position for the buyer{'}s XVA increases in $r_f^-$. In presence of full collateralization, i.e. $\alpha=1$, the situation reverses. The position in the funding account for the buyer{'}s XVA is short and stays constant with respect to $r_f^-$. Vice versa, for the seller{'}s XVA the size of the short position increases in size with respect to $r_f^-$.

If $\alpha$ is high, the trader will have to post more collateral and consequently reduce the cash resources for his replicating strategy. He will then have to borrow more from the funding desk, resulting in higher funding costs. This drives up both the seller{'}s XVA and the number of shares of stocks and bonds needed for the replication strategy.

  \begin{figure}[ht!]
    \centering
      \includegraphics[width=6.6cm]{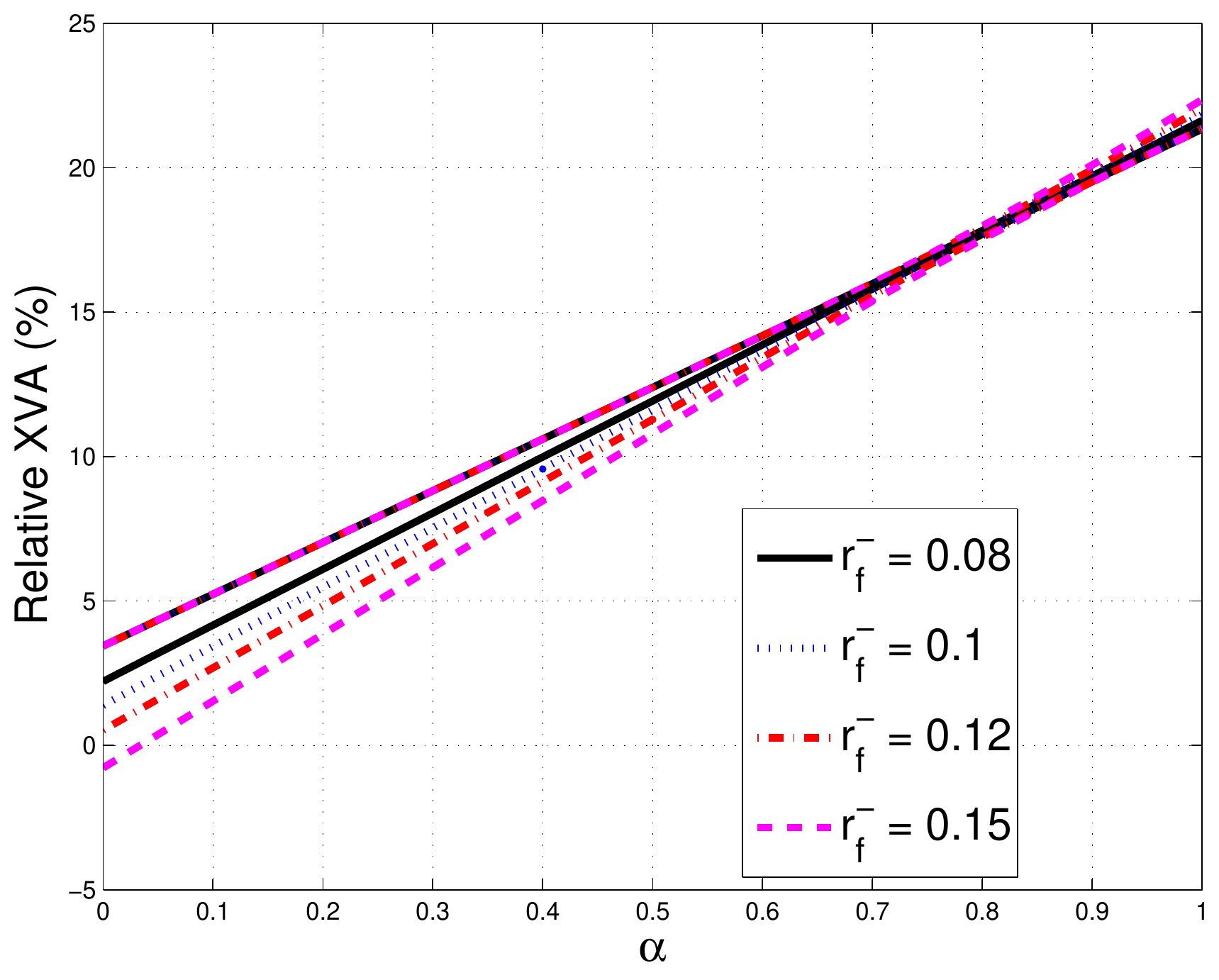}
      \includegraphics[width=6.6cm]{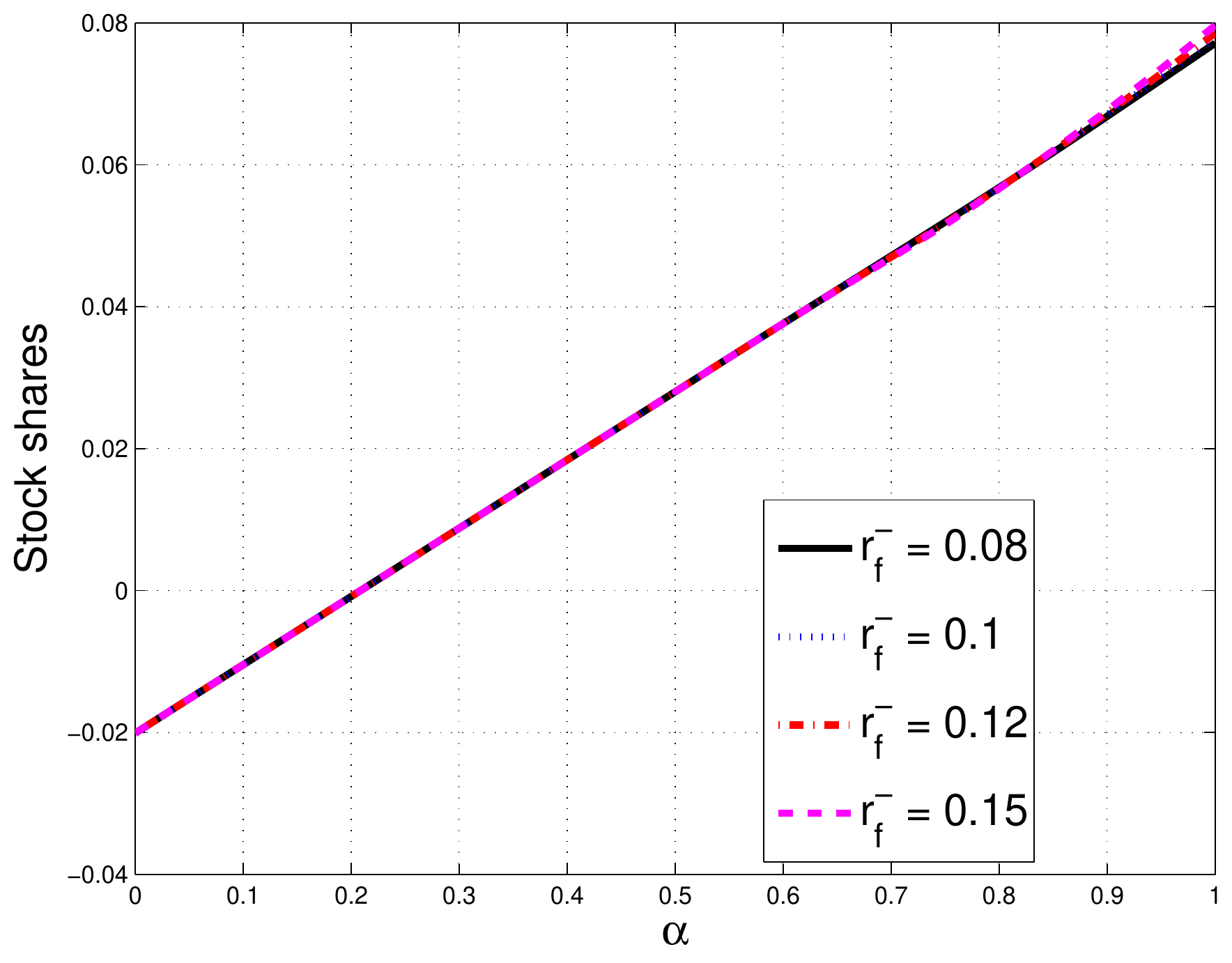}
      \includegraphics[width=6.6cm]{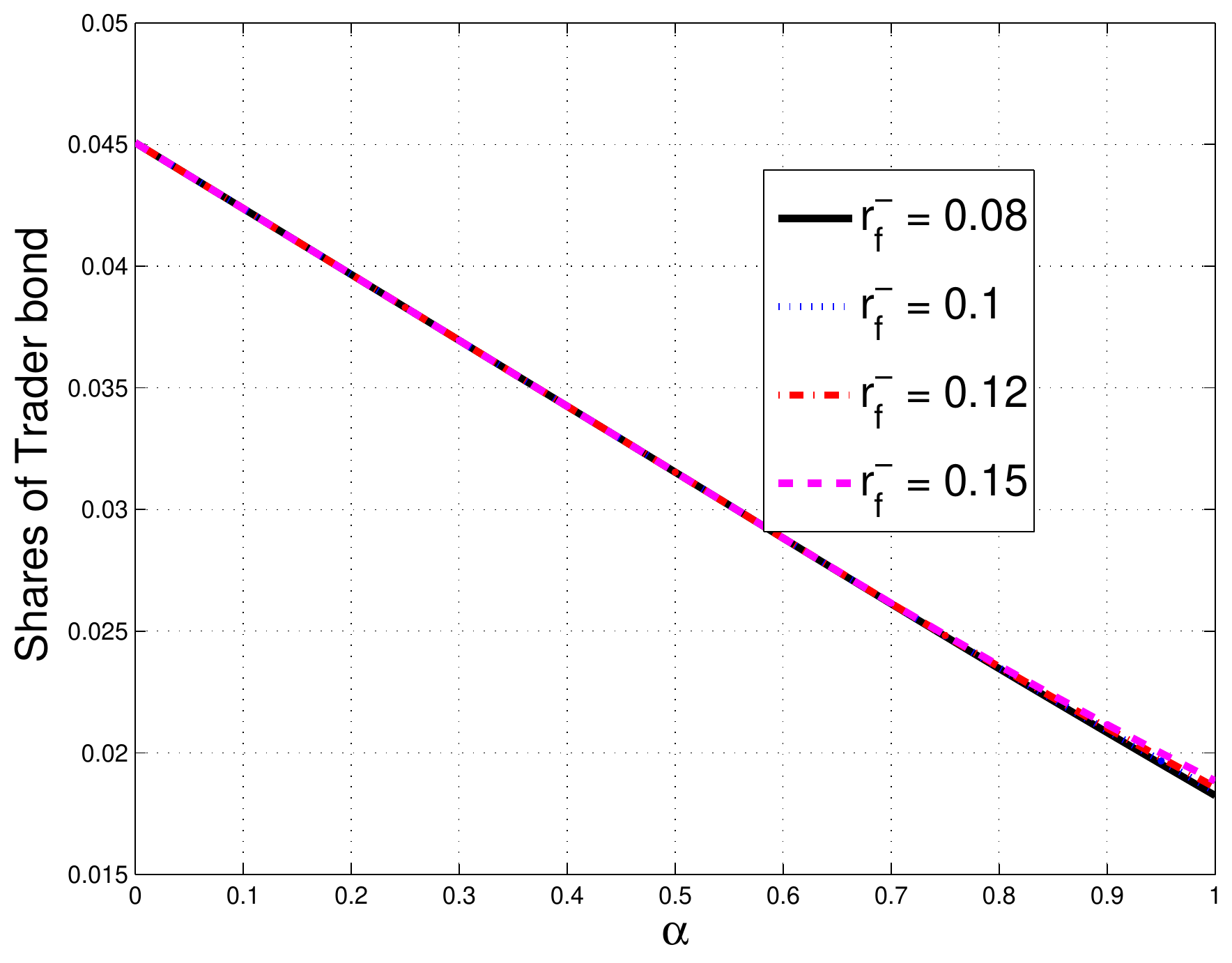}
      \includegraphics[width=6.6cm]{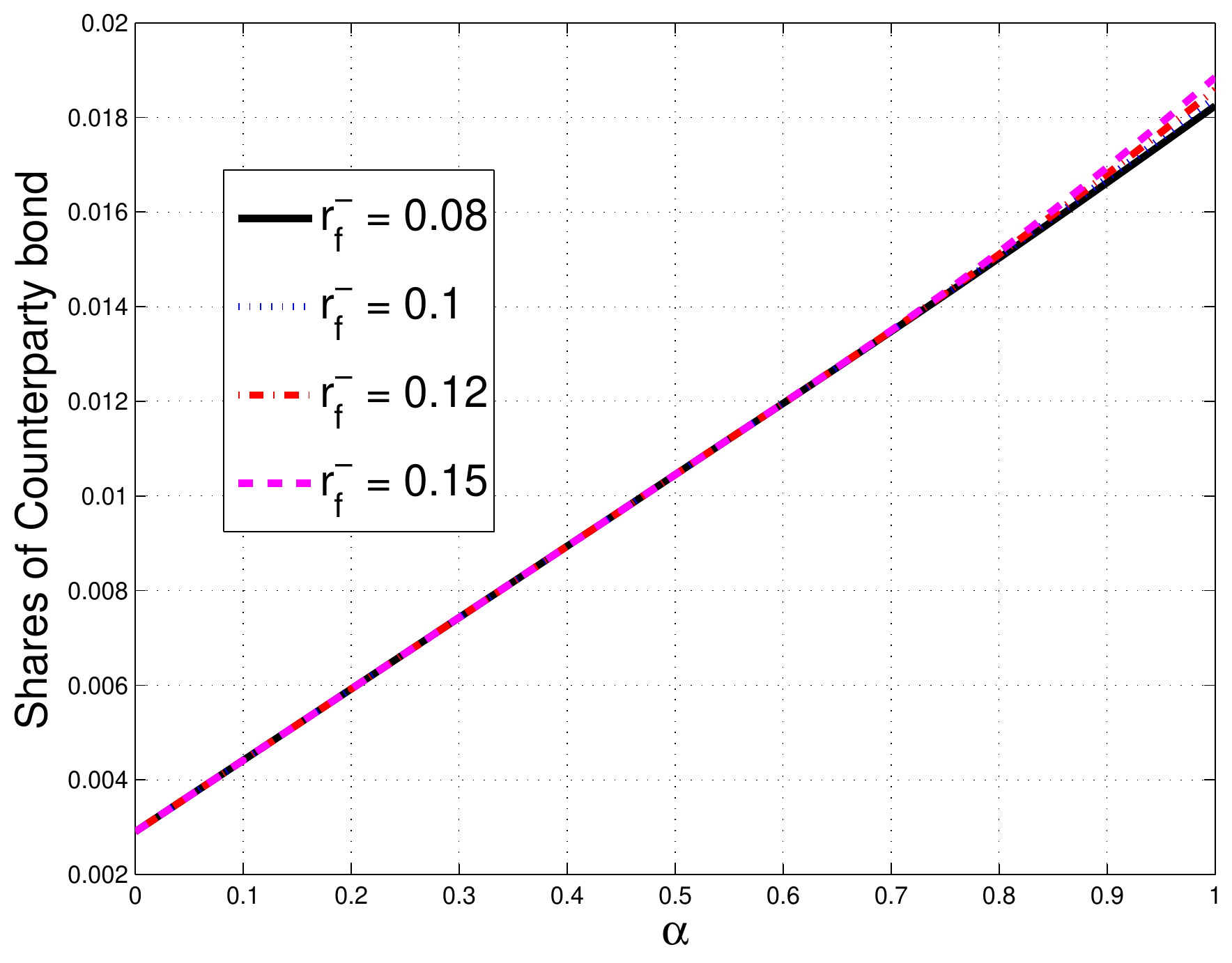}
    \caption{Top left: Buyer{'}s and seller{'}s XVA as a function of $\alpha$ for different $r_f^-$. The seller's lies above the buyer{'}s XVA and the same line style is used for both. Top right: Number of stock shares in the replication strategy. Bottom left: Number of trader bond shares in the replication strategy. Bottom right: Number of counterparty bond shares in the replication strategy. We plot the strategies for the portfolio replicating the seller{'}s XVA.}
  \label{fig:alpharfm}
  \end{figure}

  \begin{table}[hpt]
    \centering
      \begin{tabular}{|c|c|c|c|}
	\hline
	$\alpha$ & $r_f^-$ & {\text Seller{'}s XVA: funding account} (\$) & {\text Buyer{'}s XVA: funding account} (\$)\\
	\hline
	\hline
	0 & 0.08 & 0.0039 & 0.0403 \\
	\hline
	0 & 0.2 & 0.0039 &  0.0447 \\
	\hline
	0.25 & 0.08 & 0.0249 & 0.0257 \\
	\hline
	0.25 & 0.2 & 0.0249 & 0.0287 \\
	\hline
	0.75 & 0.08 & -0.0037 & -0.0036 \\
	\hline
	0.75 & 0.2 & -0.0038 & -0.0032 \\
	\hline
	1 & 0.08 & -0.0182 & -0.018 \\
	\hline
	1 & 0.2 & -0.0193 & -0.018 \\
	\hline
      \end{tabular}
    \caption{The columns give the dollar position in the funding account corresponding to the replicating strategies of seller{'}s XVA and buyer{'}s XVA.}
  \label{tab:Tablealrfm}
  \end{table}

\paragraph{Higher collateralization increases portfolio holdings.}

  \begin{figure}[ht!]
    \centering
      \includegraphics[width=6.6cm]{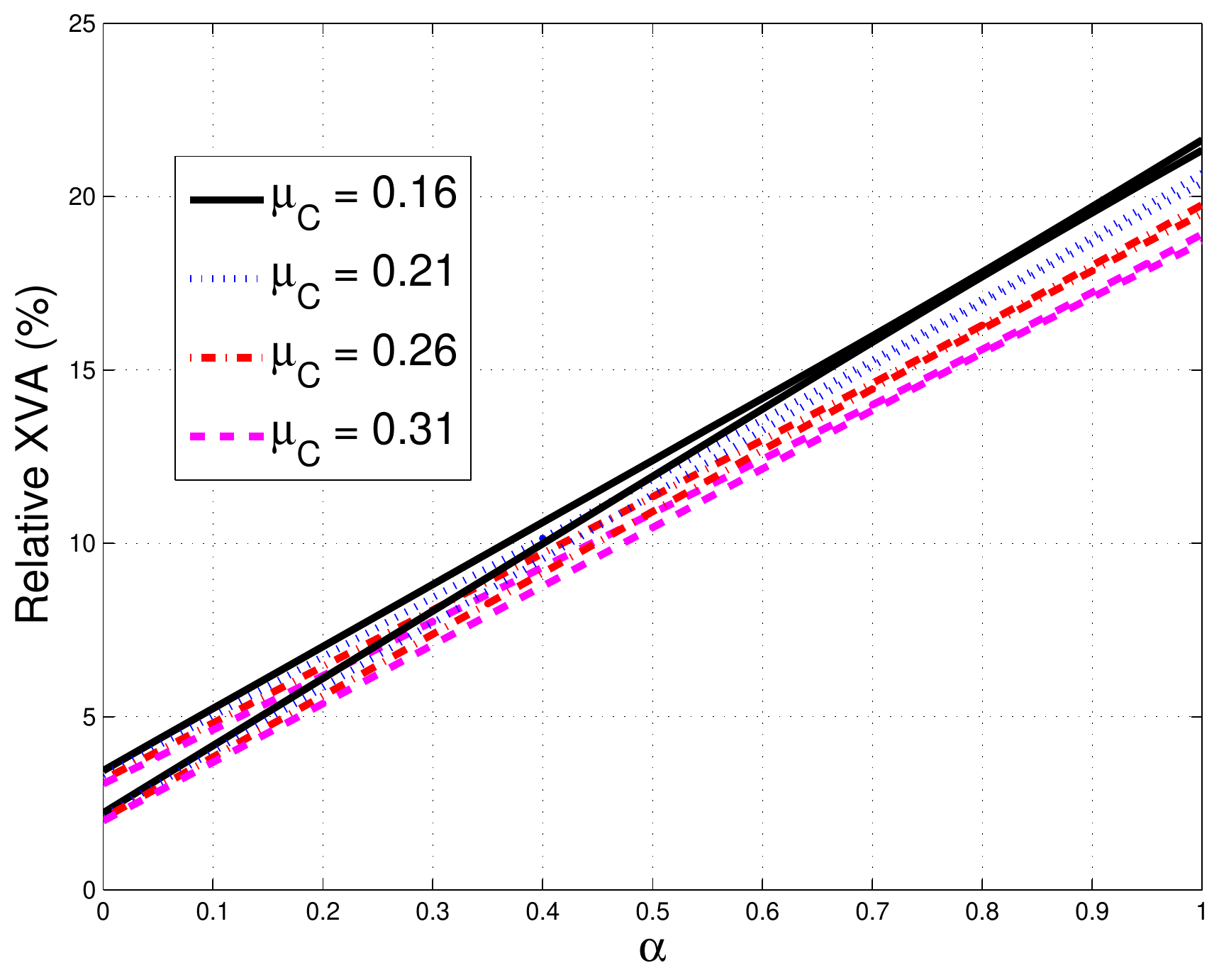}
      \includegraphics[width=6.6cm]{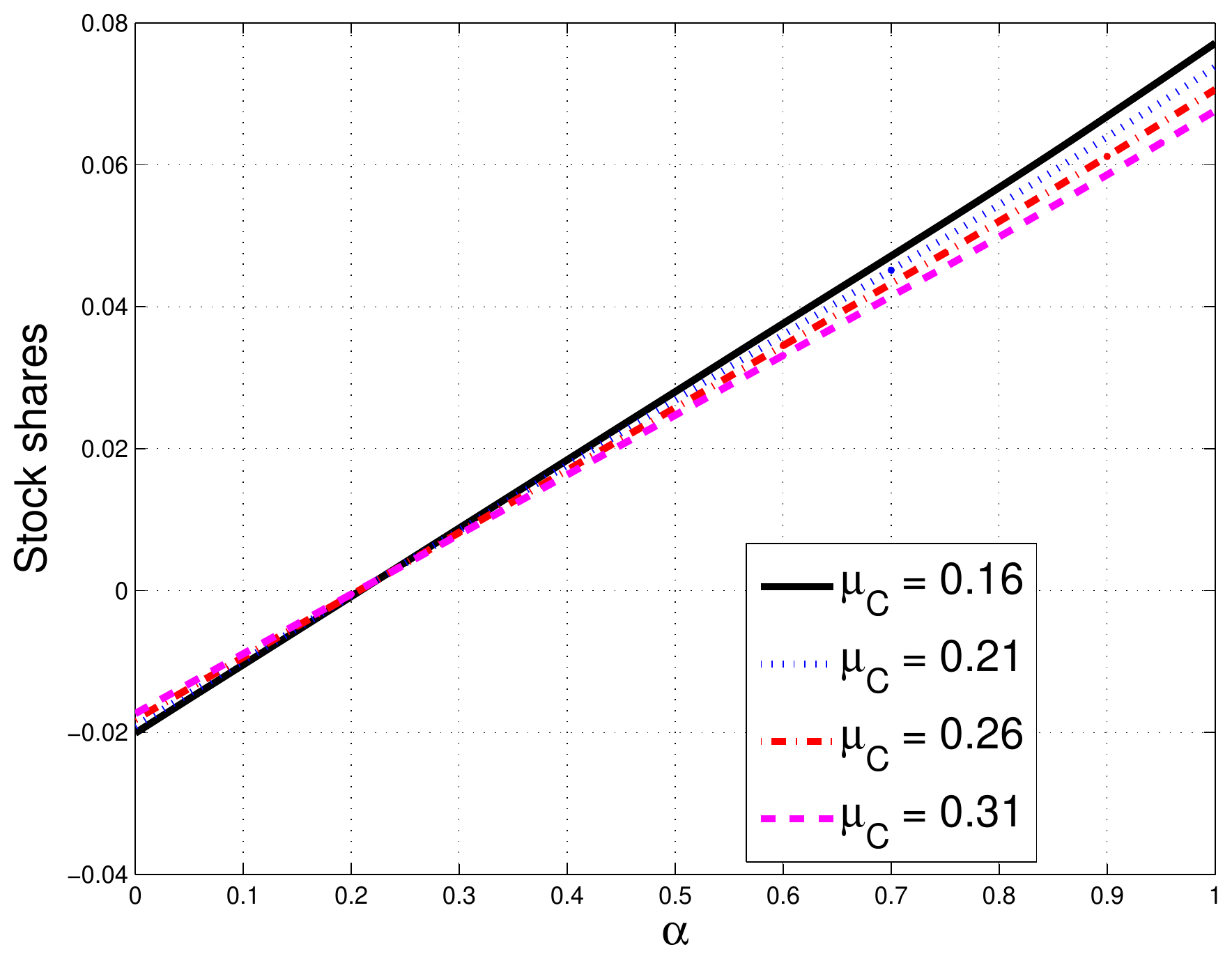}
      \includegraphics[width=6.6cm]{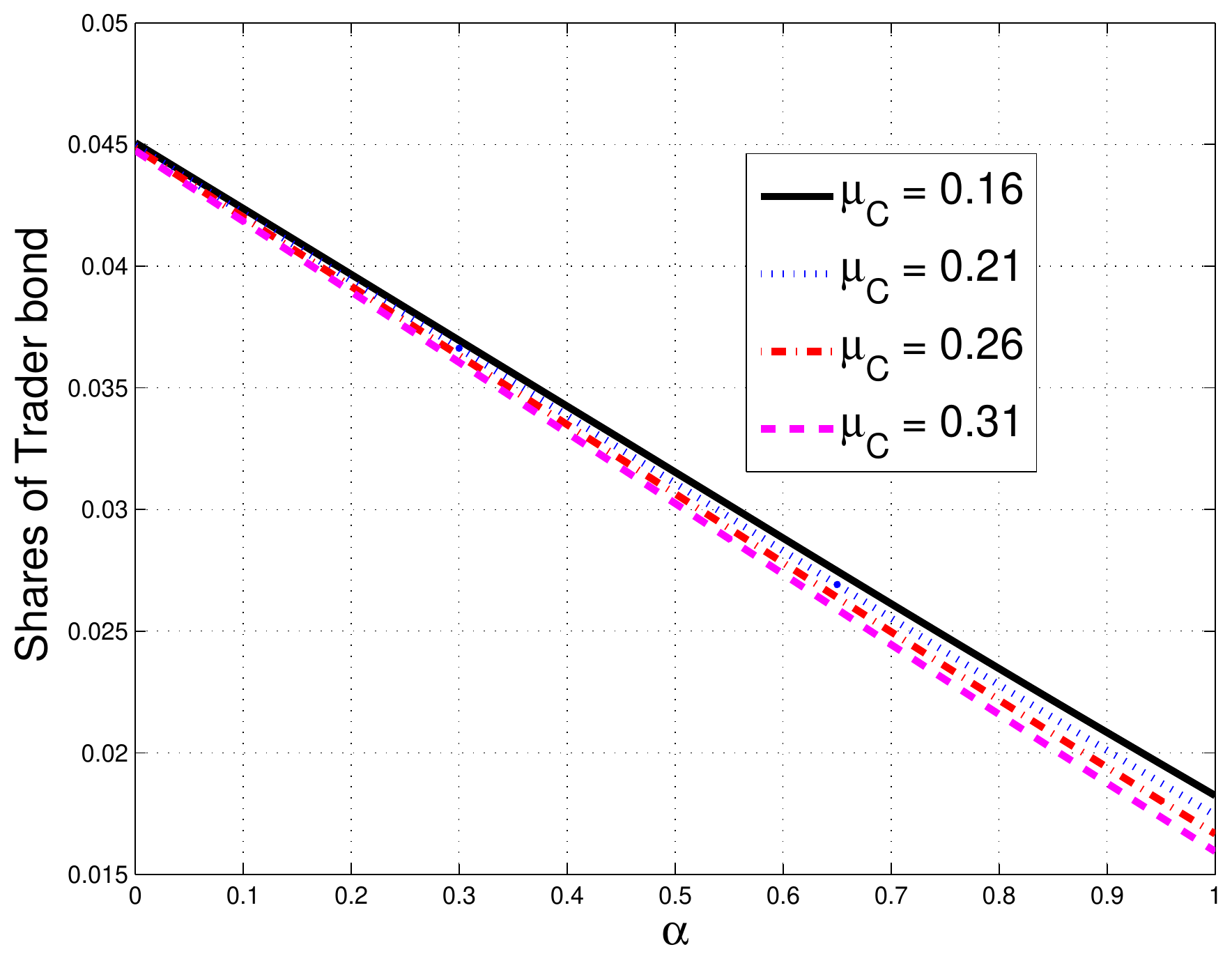}
      \includegraphics[width=6.6cm]{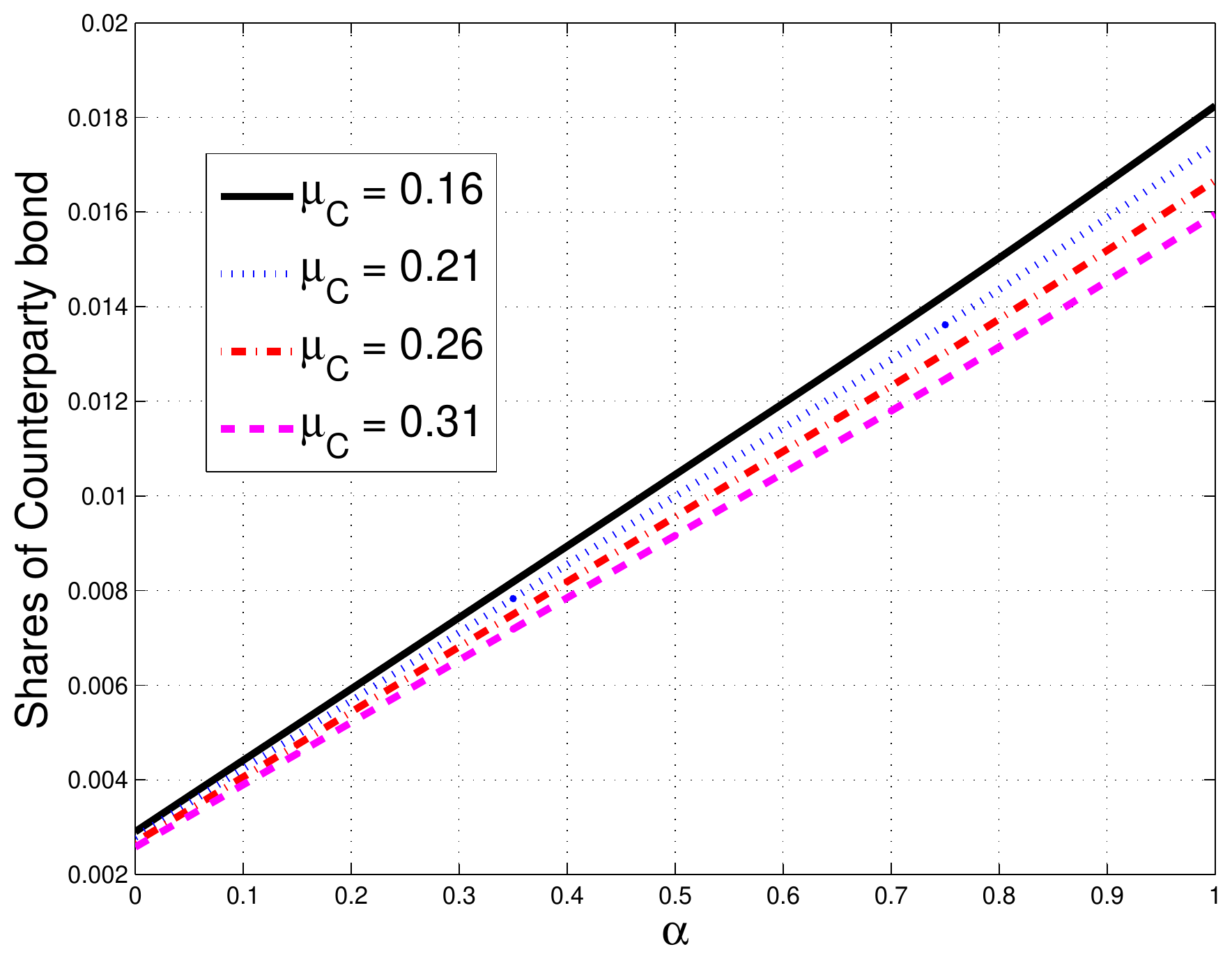}
    \caption{Top left: Buyer{'}s and seller{'}s XVA  as a function of $\alpha$ for different $h_C^{\Qxx}$. Top right: Number of stock shares in the replication strategy. Bottom left: Number of trader bond shares in the replication strategy. Bottom right: Number of counterparty bond shares in the replication strategy.}
  \label{fig:alphahc}
  \end{figure}

As the collateral level $\alpha$ increases, the seller{'}s XVA increases. This happens because the value of the closeout position becomes higher as it can be directly seen from Eq.~\eqref{eq:theta} (notice that $\hat{V} > 0$ because we are considering a short call option position). The trader would then need to construct a portfolio replicating a larger position, hence he must take more risk. He achieves this by increasing the number of shares of stock and bond underwritten by the counterparty. Moreover, higher collateralization levels reduce the size of the downward negative jump to the closeout value occurring when the trader defaults. Consequently, the trader needs to purchase a smaller amount of his bonds to replicate this position as $\alpha$ increases. This behavior is confirmed from the plot in Figure \ref{fig:alphahc}.

\paragraph{The width of the no-arbitrage band is insensitive to counterparty{'}s default intensity. }
Figure~\ref{fig:alphahc} shows that both seller{'}s and buyers{'}s XVA decrease, if the counterparty{'}s default intensity $h_C^{\Qxx}$ increases. When $\alpha$ is low, the two quantities drop by nearly the same amount and the width of the no-arbitrage band is unaffected. As $\alpha$ gets larger, the seller{'}s XVA decreases faster relative to the buyer{'}s XVA and the two quantities almost coincide when $\alpha=1$.

  \begin{table}[hpt]
    \centering
      \begin{tabular}{|c|c|c|}
	\hline
	$r_f^-$ & {\text Seller{'}s XVA: funding} (\$) & {\text Buyer{'}s XVA: funding account} (\$)\\
	\hline
	\hline
	0.08 & -0.0124 & -0.0123 \\
	\hline
	0.1 &  -0.0125 & -0.0122 \\
	\hline
	0.15 & -0.0127 & -0.0122 \\
	\hline
	0.2 & -0.013 & -0.0122  \\
	\hline
      \end{tabular}
    \caption{The columns give the dollar position in the funding account corresponding to the replicating strategies of seller{'}s XVA and buyer{'}s XVA. We set $h^{\Qxx}_C = 0.15$.}
  \label{tab:Tablerf}
  \end{table}

Consistently with Figure \ref{fig:alphahc}, Figure \ref{fig:hcalpha} shows that the seller{'}s XVA decreases when the default intensity of the counterparty $h^{\Qxx}_C$ increases. This can be understood as follows. Using the relations between the default intensities under the probability measures $\Px$ and $\Qxx$ given in Eq.~\eqref{eq:relationsmeasures}, we can see that increasing $h_C^{\Qxx}$ is equivalent to increasing the bond rate $r^C$ while keeping the default probability under $\Px$ constant. Hence, the trader would earn higher premium from his long position in counterparty bonds (see also bottom panels of Figure \ref{fig:hcalpha}). Such a gain dominates over the funding costs incurred when replicating a larger closeout position (Eq.~\eqref{eq:theta} indicates that the closeout payment increases to the risk-free payoff $\hat{V}$ as $h^{\Qxx}_C$ increases). Altogether, this means that the funding costs of the investor would be reduced as $h_C^{\Qxx}$ increases.

Hence, the dependence of XVA on counterparty{'}s default intensity contrasts with its sensitivity to collateral levels numerically illustrated in Figure \ref{fig:alpharfm}. This is because higher default risk of the counterparty also means higher return on the bond underwritten by the counterparty, whereas higher $\alpha$ only means that a larger value of the closeout payment needs to be replicated.

  \begin{figure}[ht!]
    \centering
      \includegraphics[width=6.6cm]{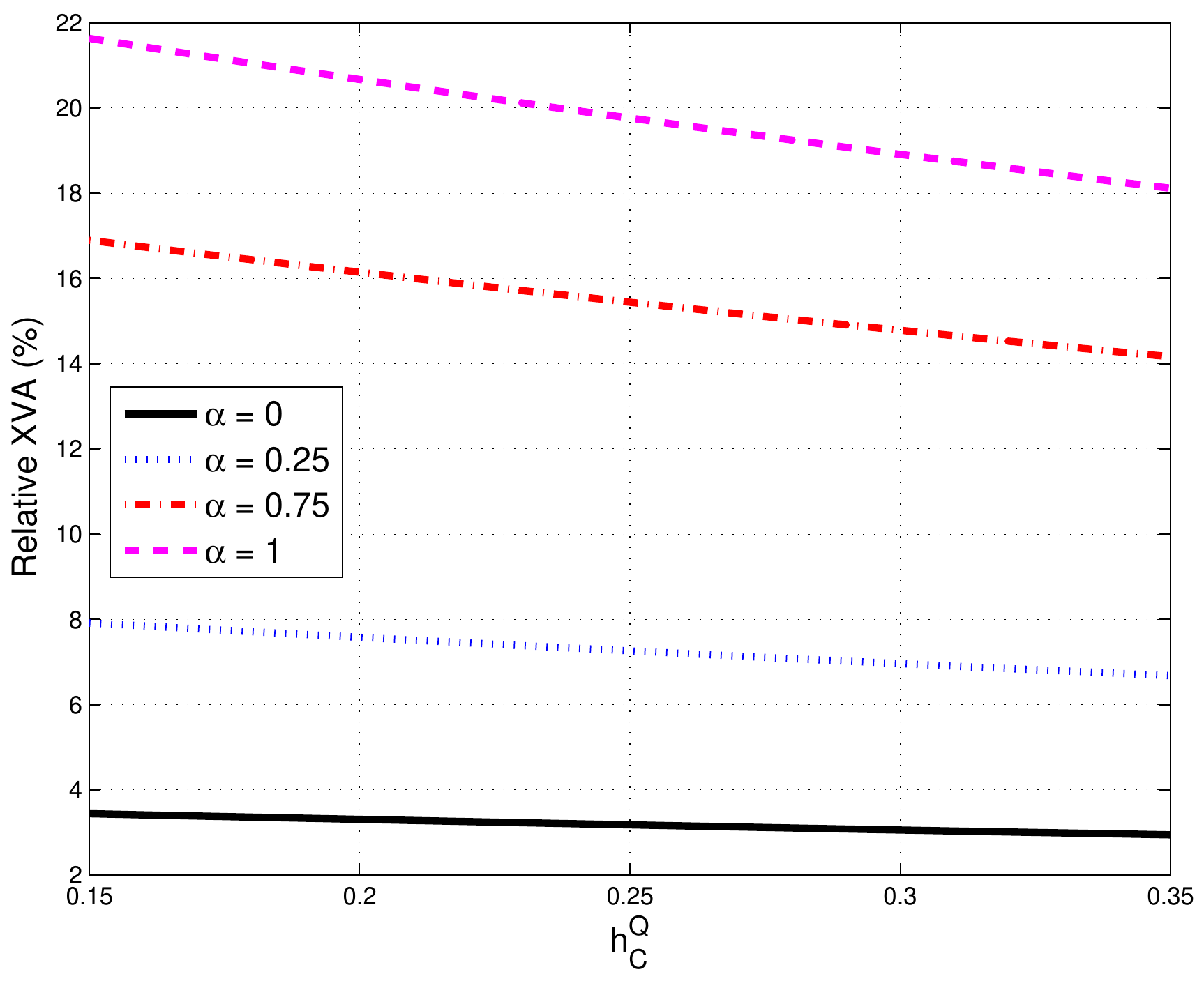}
      \includegraphics[width=6.6cm]{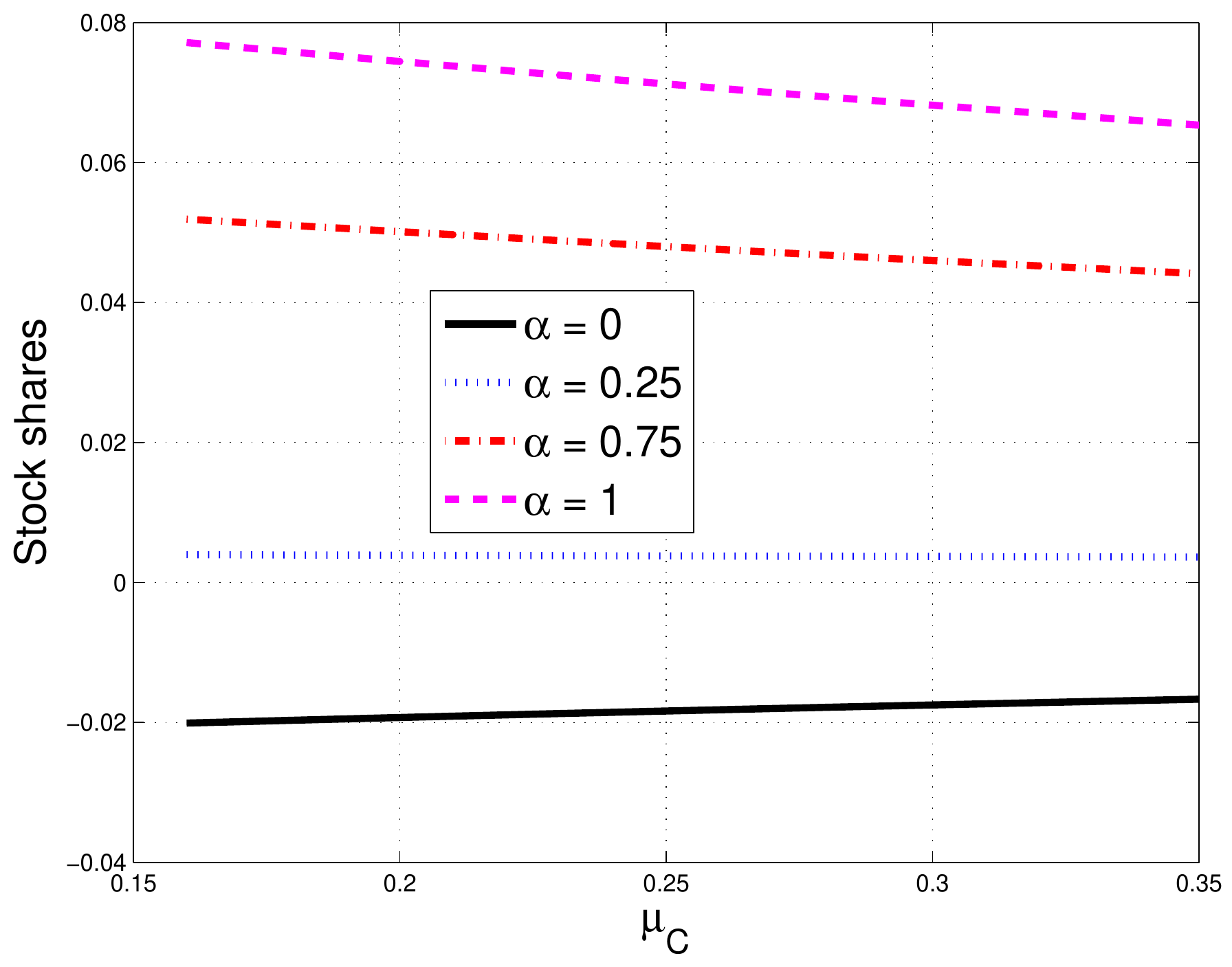}
      \includegraphics[width=6.6cm]{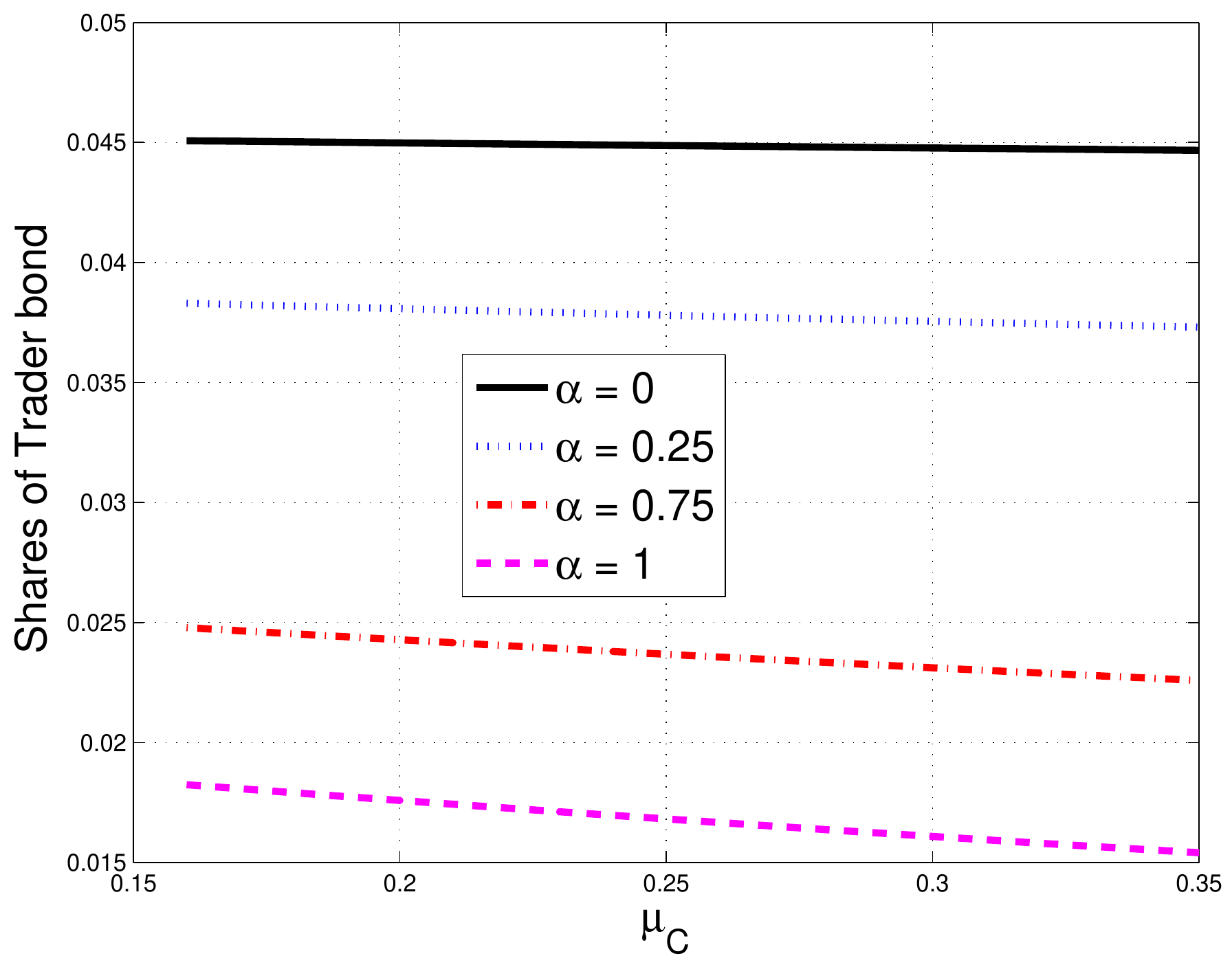}
      \includegraphics[width=6.6cm]{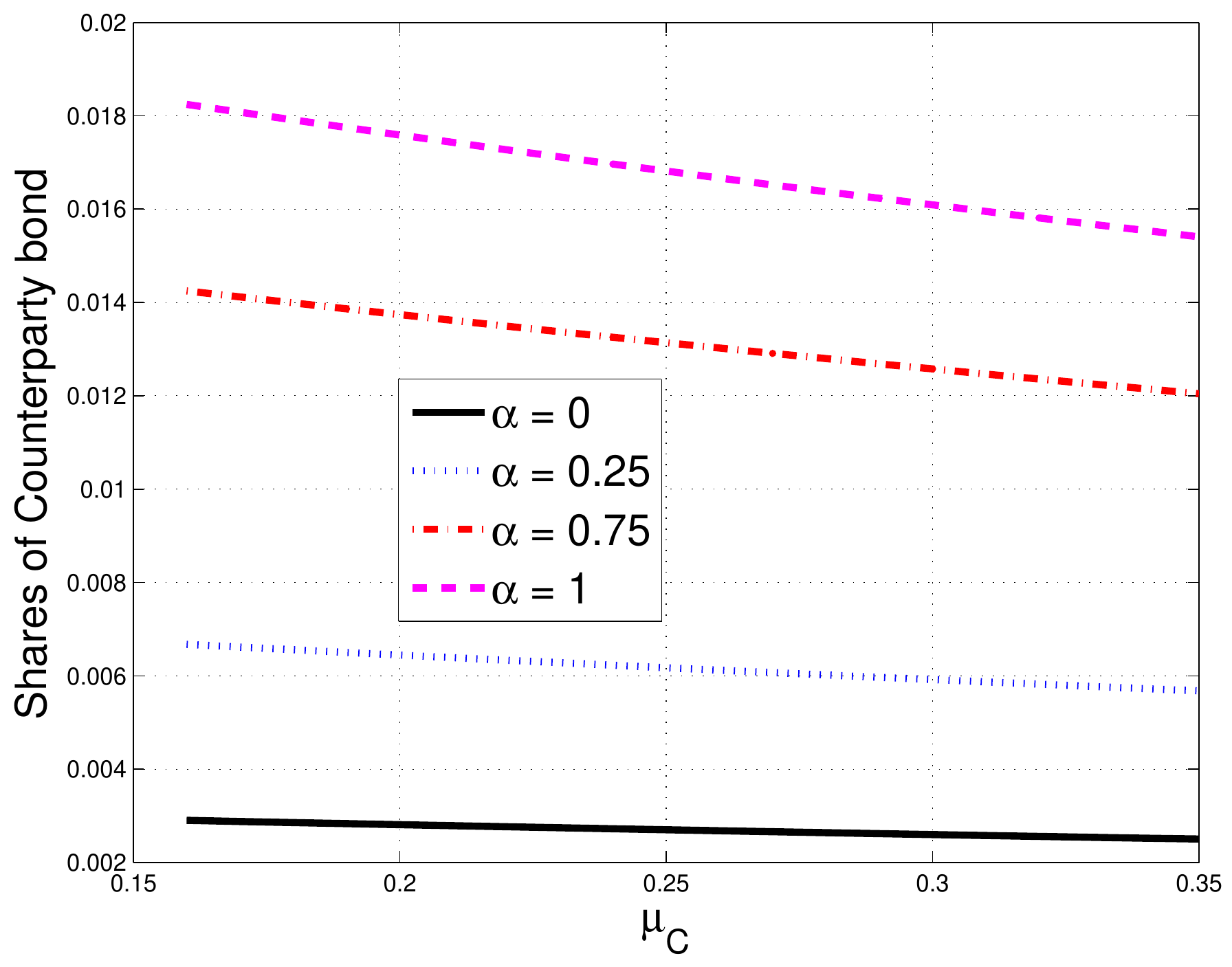}
    \caption{Top left: Seller{'}s XVA as a function of $h_C^{\Qxx}$ for different $\alpha$. Top right: Number of stock shares in the replication strategy. Bottom left: Number of trader bond shares in the replication strategy. Bottom right: Number of counterparty bond shares in the replication strategy. The replicating portfolio refers to the seller{'}s XVA.}
  \label{fig:hcalpha}
  \end{figure}

\section{Conclusions} \label{sec:conclusions}
We have developed a rigorous analysis of the semilinear PDE associated with the BSDE characterizing the price process of a portfolio replicating a European option, when funding, collateral and closeout costs are taken into account. We have shown the existence and uniqueness of a classical solution to the PDE under mild assumptions on the coefficients. Using this result, we have conducted a thorough numerical study analyzing the sensitivity of XVA and of the claim{'}s replication strategy to collateral levels, default risk and rates asymmetries. Our findings support the introduction of centralized XVA desks to manage and hedge all costs related to over-the-counter transactions. It shows that funding costs originating from the different trading components cannot be easily separated and hence attributed to different business units (CVA, DVA and FVA desks) because they are highly interdependent.

\appendix
\section{PDEs and Replication Strategies}
Recall the measurable function $\vv$ from Remark \ref{remark:PDE1} that was defined by $\vv(t, S_t) =V_t \ind_{\{\tau>t\}}.$ The following theorem shows how the function $\vv$ can be used to compute the hedging strategies. Noting that the law of $S_t$ is absolutely continuous, the proof of the theorem becomes analogous to the proof of Theorem 4.1.4 in \cite{Delong}. Hence, we omit it here and only give the statement of the theorem.

\newpage

\begin{theorem}\label{thm:strat}
Consider the data $(f,\theta_\tau(\hat V(\tau, S_\tau)),\hat V(T, S_T))$ for the BSDE given by \eqref{eq:BSDE-sell}. Additionally, let the function $\vv(t, S_t) =V_t \ind_{\{\tau>t\}}$ be defined as in
Remark \ref{remark:PDE1}. Then, on the set $\{t<\tau\}$ we have that
  \begin{align*}
    Z_t &= \sigma S_t \vv_S(t,S_t),\\
    Z_t^j &= \theta_j(\hat V(t, S_t)) - \vv(t,S_t).
  \end{align*}
\end{theorem}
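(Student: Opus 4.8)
The plan is to identify the control processes $(Z,Z^I,Z^C)$ by applying It\^o's formula to the classical PDE solution along the forward process and matching the resulting semimartingale decomposition with the BSDE dynamics~\eqref{eq:Delong-comp1}, following the argument of Theorem~4.1.4 in \cite{Delong}; the only genuinely new feature is that the jump controls $Z^I,Z^C$ must be read off from the default jump rather than from a jump of the forward process. As a preliminary, recall from Remark~\ref{remBSDE} and Remark~\ref{remark:PDE1} that the (unique, by Theorem~A.2 of \cite{BicCapSturm}) solution $(V,Z,Z^I,Z^C)$ of~\eqref{eq:BSDE-sell} is Markovian with $V_t=\vv(t,S_t)$ on $\{t<\tau\}$, and that $\vv$ is a viscosity solution of~\eqref{eq:PDE1}--\eqref{eq:PDE-bnd1} lying in the growth class of Theorem~\ref{thm:PDE} (indeed $\vv$ has polynomial growth, being a conditional $\Qxx$-expectation functional of polynomially growing data). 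Proposition~\ref{prop:sol-exists} produces a \emph{classical} solution of that PDE, and the uniqueness statement in Theorem~\ref{thm:PDE} forces $\vv$ to coincide with it; hence $\vv\in C^{1,2}$ and It\^o's formula may be applied to $t\mapsto\vv(t,S_t)$.

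On $\{t<\tau\}$ the process $S$ is continuous with $\Qxx$-dynamics $dS_t=r_D S_t\,dt+\sigma S_t\,dW_t^{\Qxx}$, so It\^o's formula gives
\begin{equation*}
  d\vv(t,S_t)=\Bigl(\vv_t+r_D S_t\,\vv_S+\tfrac12\sigma^2 S_t^2\,\vv_{SS}\Bigr)(t,S_t)\,dt+\sigma S_t\,\vv_S(t,S_t)\,dW_t^{\Qxx}.
\end{equation*}
Equating continuous martingale parts with those of~\eqref{eq:Delong-comp1} yields $Z_t=\sigma S_t\,\vv_S(t,S_t)$ for $dt\otimes d\Qxx$-a.e.\ $(t,\omega)$ on $\{t<\tau\}$, which is the first identity. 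For the jump controls I would examine the jump of $V$ at the default time $\tau_j$ on the event $\{\tau_j=\tau<T\}$: on one hand~\eqref{eq:Delong-comp1} shows $\Delta V_{\tau_j}=Z^j_{\tau_j}$ (only $\tilde N^{j,\Qxx}$ jumps there, and by size one), while on the other hand the terminal/default condition~\eqref{eq:Delong-comp3} gives $V_{\tau_j}=\theta_j\bigl(\hat v(\tau_j,S_{\tau_j})\bigr)$ and continuity of $S$ and $\vv$ gives $V_{\tau_j-}=\vv(\tau_j,S_{\tau_j})$; hence $Z^j_{\tau_j}=\theta_j\bigl(\hat v(\tau_j,S_{\tau_j})\bigr)-\vv(\tau_j,S_{\tau_j})$ on $\{\tau_j=\tau<T\}$. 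Writing $Z^j_t=z^j(t,S_t)$ by the Markovian representation and running the standard jump-martingale argument — two purely discontinuous martingales with the same jumps coincide, together with absolute continuity of $\mathrm{Law}(S_t)$ exactly as in \cite{Delong} — would identify $z^j(t,s)=\theta_j\bigl(\hat v(t,s)\bigr)-\vv(t,s)$, so $Z^j_t=\theta_j\bigl(\hat V(t,S_t)\bigr)-\vv(t,S_t)$ on $\{t<\tau\}$. As a consistency check, substituting these expressions into the drift of~\eqref{eq:Delong-comp1} and using $d\varpi_t^{j,\Qxx}=-h_j^{\Qxx}\,dt$ on $\{t<\tau\}$ reproduces precisely the PDE~\eqref{eq:PDE1}.

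The step I expect to be the main obstacle is the passage from the identity for $Z^j$ at the single random instant $\tau_j$ to its validity for all $t<\tau$: this is exactly where one needs the default times to be diffuse and independent of the continuous driver $S$, together with absolute continuity of $\mathrm{Law}(S_t)$, in order to turn an almost-everywhere statement in $(t,s)$ into an identity along the path, and handling this cleanly for two independent default times absorbed at $\tau$ is the part genuinely transferred from Theorem~4.1.4 of \cite{Delong}. By contrast, the identification of $Z$ and the drift check are routine once the regularity of $\vv$ from Proposition~\ref{prop:sol-exists} has been secured.
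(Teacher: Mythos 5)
Your reconstruction is faithful to the paper's route: the paper omits the proof, pointing to Theorem~4.1.4 of \cite{Delong} together with absolute continuity of $\mathrm{Law}(S_t)$, and that is exactly the structure you follow — identify $Z$ from the continuous martingale part via It\^o on the classical solution $\vv$, read off $Z^j$ from the default jump using the closeout condition \eqref{eq:Delong-comp3}, and then use the Markovian representation together with diffuseness/independence of $\tau_j$ and absolute continuity of $\mathrm{Law}(S_t)$ to promote the single-instant identity to an a.e.\ identity along the path. Your closing ``consistency check'' (substituting the candidate into the driver and recovering \eqref{eq:PDE1}) is in fact the verification step of Delong's argument and could equally be promoted to the main thread, but as written the proof is correct and aligned with the paper's intent.
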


\end{document}